\keywords{logical framework, $\lambda\Pi$-calculus modulo rewriting, Dedukti, logic, type theory}
\newcommand\hide[1]{}
\newcommand\lpr{$\lambda\Pi\slash{\equiv}$}
\def\TYPE{\mbox{\tt TYPE}}
\def\KIND{\mbox{\tt KIND}}
\def\ra{\rightarrow}
\def\lra{\hookrightarrow}
\def\I{{\color{blue} I}}
\def\Set{{\color{blue} \mbox{\it Set}}}
\def\bliota{{\color{blue} \iota}}
\def\El{{\color{blue} \mbox{\it El}}}
\def\Prop{{\color{blue} \mbox{\it Prop}}}
\def\Prf{{\color{blue} \mbox{\it Prf}}}
\def\imp{\mathbin{\color{blue} \Rightarrow}}
\def\fa{{\color{blue} \forall}}
\def\bltop{{\color{blue} \top}}
\def\blbot{{\color{blue} \bot}}
\DeclareMathOperator{\blneg}{{\color{blue} \neg}}
\def\conj{\mathbin{\color{blue} \wedge}}
\def\disj{\mathbin{\color{blue} \vee}}
\def\ex{{\color{blue} \exists}}
\def\S{{\color{blue} \mbox{\it succ}}}
\def\0{{\color{blue} 0}}
\def\positive{{\color{blue} \mbox{\it positive}}}
\def\pred{{\color{blue} \mbox{\it pred}}}
\def\Prfc{{\color{blue} \mbox{\it Prf}_c}}
\def\impc{\mathbin{\color{blue} \Rightarrow_c}}
\def\conjc{\mathbin{\color{blue} \wedge_c}}
\def\disjc{\mathbin{\color{blue} \vee_c}}
\def\fac{{\color{blue} \forall_c}}
\def\exc{{\color{blue} \exists_c}}
\def\o{{\color{blue} o}}
\def\arr{\mathbin{\color{blue} \leadsto}}
\def\arrd{\mathbin{\color{blue} \leadsto_d}}
\def\impd{\mathbin{\color{blue} \Rightarrow_d}}
\def\blpi{{\color{blue} \pi}}
\def\Scheme{{\color{blue} \mbox{\it Scheme}}}
\def\Els{{\color{blue} \mbox{\it Els}}}
\def\lift{{\color{blue} \uparrow}}
\def\calfa{{\rotatebox[origin=c]{180}{{\color{blue} A}}}}
\def\sffa{{\rotatebox[origin=c]{180}{{\color{blue}\ensuremath{\mathcal{A}}}}}}
\def\pos{\mbox{\it positive}}
\def\psub{{\color{blue} \mbox{\it psub}}}
\def\pair{{\color{blue} \mbox{\it pair}}}
\def\paird{{\color{blue} \mbox{\it pair}^\dagger}}
\def\dtarr{\mathbin{\color{blue} \rightsquigarrow_d}}
\def\set{{\color{blue} \mbox{\it set}}}
\def\Sett{{\color{blue} \mbox{\it Set1}}}
\def\Ty{{\color{blue} \mbox{\it Ty}}}
\def\fst{{\color{blue} \mbox{\it fst}}}
\def\snd{{\color{blue} \mbox{\it snd}}}
\def\const{{\mbox{\it const}}}
\newcounter{WarnCounts}
\newcounter{ToDos}
\newcommand\cC{{\mathcal C}}
\newcommand\cR{{\mathcal R}}
\newcommand\cE{{\mathcal E}}
\newcommand\cS{{\mathcal S}}
\newcommand\cM{{\mathcal M}}
\newcommand\cU{{\mathcal U}}
\newcommand\cV{{\mathcal V}}
\newcommand\cL{{\mathcal L}}
\newcommand{\dedukti}{\textsc{Dedukti}}
\newcommand{\lpc}{{\ensuremath{\lambda \Pi\text{-calculus}}}}
\newcommand{\lpcm}{{\ensuremath{\lambda \Pi\text{-calculus modulo theory}}}}
\newcommand{\lprolog}{{\ensuremath{\lambda\text{-Prolog}}}}
\newcommand{\tabs}[3]{\ensuremath{\lambda{#1}\,{:}\,{#2},{#3}}}
\newcommand{\tapp}[2]{\ensuremath{{#1}~{#2}}}
\newcommand{\tpi}[3]{\ensuremath{\Pi{#1}\,{:}\,{#2},#3}}
\newcommand\irule[3]{\infer[#3]{#2}{#1}}
\title{A modular construction of type theories}
\author[F.~Blanqui]{Frédéric Blanqui\lmcsorcid{0000-0001-7438-5554}}[a]
\author[G.~Dowek]{Gilles Dowek\lmcsorcid{0000-0001-6253-935X}}[a]
\author[E.~Grienenberger]{Emilie Grienenberger}[a]
\author[G.~Hondet]{Gabriel Hondet}[a]
\author[F.~Thiré]{François Thiré}[b]
\address{Universit\'e Paris-Saclay, ENS Paris-Saclay, LMF, CNRS, Inria, France}
\email{\{frederic.blanqui,gilles.dowek,gabriel.hondet\}@inria.fr}
\email{emilie.grienenberger@ens-paris-saclay.fr}
\address{Nomadic Labs, France}
\email{francois.thire@nomadic-labs.com}
\begin{document}

\maketitle

\begin{abstract}
  The $\lambda\Pi$-calculus modulo theory is a logical framework in
  which many type systems can be expressed as theories.  We present
  such a theory, the theory $\cU$, where proofs of several logical
  systems can be expressed. Moreover, we identify a sub-theory of
  $\cU$ corresponding to each of these systems, and prove that, when a
  proof in $\cU$ uses only symbols of a sub-theory, then it is a proof
  in that sub-theory.
\end{abstract}

\section{Introduction}

The \lpcm{} (\lpr)~\cite{cousineau07tlca}, that is the basis of the language
\dedukti{}
\cite{expressing,hondet20fscd}, is a logical framework, that is, a framework to
define theories. It generalizes some previously proposed frameworks:
  Predicate logic \cite{HilbertAckermann}, \lprolog{}
\cite{NadathurMiller}, Isabelle \cite{Paulson}, the
  Edinburgh logical framework \cite{HHP}, also called the \lpc,
  Deduction modulo theory \cite{dowek03jar,dowek03jsl}, Pure type systems
\cite{Berardi,Terlouw}, and Ecumenical logic
\cite{Prawitz15,Dowek15,PereiraRodriguez17,Grienenberger19}.  It is
thus an extension of Predicate logic that provides the
possibility for all symbols to bind variables, a syntax for
proof-terms, a notion of computation, a notion of proof reduction for
axiomatic theories, and the possibility to express both constructive
and classical proofs.

\lpr{} enables to express all theories that can be expressed in
  Predicate logic, such as geometry, arithmetic, and set theory, but
also Simple type theory \cite{Church40} and the Calculus of
constructions \cite{CoquandHuet88}, that are less easy to define in
Predicate logic.

We present a theory in \lpr{}, the theory $\cU$, where all proofs of
Minimal, Constructive, and Ecumenical predicate logic; Minimal,
Constructive, and Ecumenical simple type theory; Simple type theory
with predicate subtyping, prenex predicative polymorphism, or both;
the Calculus of constructions, and the Calculus of constructions with
prenex predicative polymorphism can be expressed.  This theory is
therefore a candidate for a universal theory, where proofs developed
in implementations of Classical predicate logic (such as automated
theorem proving systems, SMT solvers, etc.), Classical simple type
theory (such as HOL 4, Isabelle/HOL, HOL Light, etc.), the Calculus of
constructions (such as Coq, Matita, Lean, etc.), and Simple type
theory with predicate subtyping and prenex polymorphism (such as PVS),
can be expressed.

Moreover, the proofs of the theory $\cU$ can be classified as proofs
in Minimal predicate logic, Constructive predicate logic, etc. just by
identifying the axioms they use, akin to proofs in geometry that can
be classified as proofs in Euclidean, hyperbolic, elliptic, neutral,
etc. geometries.  More precisely, we identify sub-theories of the
theory $\cU$ that correspond to each of these theories, and we prove
that when a proof in $\cU$ uses only symbols of a sub-theory,
then it is a proof in that sub-theory.

In Section~\ref{sec-lpmc}, we recall the definition of \lpr{} and of a
theory. In Section~\ref{sec-u}, we introduce the theory $\cU$ step by
step without any (other than informal) reference to a known system. In
Section~\ref{sec-frag-th}, we provide a general theorem on
sub-theories in \lpr{}, and prove that every fragment of $\cU$,
including $\cU$ itself, is indeed a theory, that is, it is defined by
a confluent and type-preserving rewriting system.  Finally, in
Section~\ref{sec-subth}, we detail the sub-theories of $\cU$ that
correspond to the above mentioned systems.

This paper is an extended version of \cite{blanqui21fscd}. It provides more
details on the encoding of various features (Section \ref{sec-u}), the
proof of the Fragment Theorem (Section \ref{sec-frag-th}), and the
examples of sub-theories (Section \ref{sec-subth}).

\section{The \lpcm{}}\label{sec-lpmc}

\lpr{} is an extension of the Edinburgh logical framework
\cite{HHP} with a primitive notion of computation defined with
rewriting rules \cite{dershowitz90chapter,terese03book}.

However, for defining terms, we use Barendregt's syntax for Pure Type
Systems \cite{barendregt92chapter}:
$$t,u = c~|~x~|~\TYPE~|~\KIND~|~\tpi{x}{t}{u}~|~\tabs{x}{t}{u}~|~t~u$$
where $c$ belongs to a finite or infinite set of constants $\cC$ and
$x$ to an infinite set $\cV$ of variables.
The terms $\TYPE$ and $\KIND$ are called sorts.
The term $\tpi{x}{t}{u}$ is called a product. It is dependent if the
variable $x$ occurs free in $u$. Otherwise, it is simply written $t\ra
u$. Terms are also often written $A$, $B$, etc.
The set of constants of a term $t$ is written $\const(t)$.

A rewriting rule is a pair of terms $\ell \lra r$, such that $\ell =
c~\ell_1~\ldots~\ell_n$, where $c$ is a constant.  If $\cR$ is a set of
rewriting rules, we write $\lra_\cR$ for the smallest relation
closed by term constructors and substitution containing $\cR$,
$\lra_\beta$ for the usual $\beta$-reduction, ${\lra_{\beta\cR}}$ for
${{\lra_\beta} \cup {\lra_\cR}}$, and $\equiv_{\beta\cR}$ for the
smallest equivalence relation containing ${\lra_{\beta\cR}}$.

A relation $\lra$ is confluent on a set of terms if, for all terms
$t,u,v$ in this set such that $t\lra^*u$ and $t\lra^*v$, there exists
a term $w$ in this set such that $u\lra^*w$ and $v\lra^*w$. Confluence
implies that every term has at most one irreducible form.

\begin{figure}[ht]
$$\irule{}
        {\vdash_{\Sigma,\cR} [~]~\mbox{well-formed}}
        {\mbox{(empty)}}$$
$$\irule{\Gamma \vdash_{\Sigma,\cR} A:s}
        {\vdash_{\Sigma,\cR} \Gamma, x:A~\mbox{well-formed}}
        {\mbox{(decl)}}$$
$$~$$
$$\irule{\vdash_{\Sigma,\cR} \Gamma~\mbox{well-formed}}
        {\Gamma \vdash_{\Sigma,\cR} \TYPE:\KIND}
        {\mbox{(sort)}}$$
$$\irule{\vdash_{\Sigma,\cR} \Gamma~\mbox{well-formed}~~~\vdash_{\Sigma,\cR} A:s}
        {\Gamma \vdash_{\Sigma,\cR} c:A}
        {\mbox{(const)}~~~c:A \in \Sigma}$$
$$\irule{\vdash_{\Sigma,\cR} \Gamma~\mbox{well-formed}}
        {\Gamma \vdash_{\Sigma,\cR} x:A}
        {\mbox{(var)}~~~x:A \in \Gamma}$$
$$\irule{\Gamma \vdash_{\Sigma,\cR} A:\TYPE & \Gamma,x:A \vdash_{\Sigma,\cR} B:s}
        {\Gamma \vdash_{\Sigma,\cR} \tpi{x}{A}{B}:s}
        {\mbox{(prod)}}$$
$$\irule{\Gamma\vdash_{\Sigma,\cR} A:\TYPE & \Gamma,x:A \vdash_{\Sigma,\cR} B:s & \Gamma,x:A\vdash_{\Sigma,\cR} t:B}
        {\Gamma\vdash_{\Sigma,\cR}\tabs{x}{A}{t} : \tpi{x}{A}{B}}
        {\mbox{(abs)}}$$
$$\irule{\Gamma \vdash_{\Sigma,\cR} t: \tpi{x}{A}{B} & \Gamma \vdash_{\Sigma,\cR} u:A}
        {\Gamma \vdash_{\Sigma,\cR} t~u:(u/x)B}
        {\mbox{(app)}}$$
$$\irule{\Gamma \vdash_{\Sigma,\cR} t:A & \Gamma \vdash_{\Sigma,\cR} B:s}
        {\Gamma \vdash_{\Sigma,\cR} t:B}
        {\mbox{(conv)}~~~A\equiv_{\beta\cR} B}$$
        \caption{Typing rules of \lpr{} with signature $\Sigma$ and rewriting rules
          $\mathcal{R}$\label{typingrules}}
\end{figure}

The typing rules of \lpr{} are given in Figure \ref{typingrules}. The
difference with the rules of the Edinburgh logical framework is that,
in the rule (conv), types are identified modulo $\equiv_{\beta\cR}$
instead of just $\equiv_\beta$.  In a typing judgement $\Gamma
\vdash_{\Sigma, \cR} t:A$, the term $t$ is given the type $A$ with
respect to three parameters: a signature $\Sigma$ that assigns a type
to the constants of $t$, a context $\Gamma$ that assigns a type to the
free variables of $t$, and a set of rewriting rules $\cR$.  A context
$\Gamma$ is a list of declarations $x_1:B_1,\ldots,x_m:B_m$ formed
with a variable and a term.  A signature $\Sigma$ is a list of
declarations $c_1:A_1,\ldots,c_n:A_n$ formed with a constant and a
closed term, that is a term with no free variables. This is why
the rule (const) requires no context for typing $A$. We write
$|\Sigma|$ for the set $\{c_1, \ldots, c_n\}$, and $\Lambda(\Sigma)$
for the set of terms $t$ such that $\const(t) \subseteq |\Sigma|$. We
say that a rewriting rule $\ell \lra r$ is in $\Lambda(\Sigma)$ if
$\ell$ and $r$ are, and a context $x_1:B_1, \ldots, x_m:B_m$ is in
$\Lambda(\Sigma)$ if $B_1, \ldots, B_m$ are.  It is often convenient
to group constant declarations and rules into small clusters, called
``axioms''.

A relation $\lra$ preserves typing in $(\Sigma,\cR)$ if, for all
contexts $\Gamma$ and terms $t$, $u$ and $A$ of $\Lambda(\Sigma)$, if
$\Gamma\vdash_{\Sigma,\cR} t:A$ and $t\lra u$, then
$\Gamma\vdash_{\Sigma,\cR} u:A$. The relation $\lra_\beta$ preserves
typing as soon as $\lra_{\beta\cR}$ is confluent (see for instance
\cite{blanqui01phd}) for, in this case, the product is injective
modulo $\equiv_{\beta\cR}$:
$\tpi{x}{A}{B}\equiv_{\beta\cR}\tpi{x}{A'}{B'}$ if and only if
$A\equiv_{\beta\cR}A'$ and $B\equiv_{\beta\cR}B'$.
The relation $\lra_\cR$ preserves typing if
every rewriting rule $\ell \lra r$ preserves typing, that is:
for all contexts $\Gamma$, substitutions $\theta$ and
terms $A$ of $\Lambda(\Sigma)$, if $\Gamma\vdash_{\Sigma,\cR} \theta
l:A$ then $\Gamma\vdash_{\Sigma,\cR} \theta r:A$.

Although typing is defined with arbitrary signatures $\Sigma$ and sets
of rewriting rules $\cR$, we are only interested in sets $\cR$
verifying some confluence and type-preservation properties.

\begin{defi}[System, theory]
A system is a pair $(\Sigma,\cR)$ such that each rule of $\cR$ is in
$\Lambda(\Sigma)$.  It is a theory if $\lra_{\beta\cR}$ is confluent
on $\Lambda(\Sigma)$, and every rule of $\cR$ preserves typing in
$(\Sigma,\cR)$.
\end{defi}

Therefore, in a theory, $\lra_{\beta\cR}$ preserves typing since
$\lra_\beta$ preserves typing (for $\lra_{\beta\cR}$ is confluent) and
$\lra_\cR$ preserves typing (for every rule preserves typing). We
recall two other basic properties of \lpr{} we will use in Theorem
\ref{th-frag}:

\begin{lem}
\label{typtypprod}
If $\Gamma\vdash_{\Sigma,\cR} t:A$, then either $A=\KIND$ or
  $\Gamma\vdash_{\Sigma,\cR} A:s$ for some sort $s$.

If $\Gamma\vdash_{\Sigma,\cR} \tpi{x}{A}{B}:s$, then
$\Gamma\vdash_{\Sigma,\cR} A:\TYPE$.
\end{lem}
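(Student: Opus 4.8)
The two statements are standard structural properties of PTS-like systems, and I would prove both by induction on the derivation of the given typing judgement, following the shape of the typing rules in Figure~\ref{typingrules}.

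For the first statement, I proceed by induction on a derivation of $\Gamma\vdash_{\Sigma,\cR} t:A$. I examine the last rule used. If it is (sort), then $A=\KIND$ and we are done. If it is (const), then $A$ is the type of the constant in $\Sigma$, and the premise $\vdash_{\Sigma,\cR} A:s$ gives the conclusion (after weakening to the well-formed context $\Gamma$, which is a routine auxiliary lemma, or directly since $A$ is typed in the empty context). If it is (var), then $A\in\Gamma$, so well-formedness of $\Gamma$ together with the (decl) rule that introduced $x:A$ gives $\Gamma'\vdash_{\Sigma,\cR} A:s$ for a prefix $\Gamma'$ of $\Gamma$, and weakening yields the claim. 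If it is (prod), then $A=s$ is a sort, and one checks $s\in\{\TYPE,\KIND\}$: if $s=\TYPE$ then $\Gamma\vdash_{\Sigma,\cR}\TYPE:\KIND$ by (sort); if $s=\KIND$ then $A=\KIND$ and we are in the first case. If it is (abs), the type is a product $\tpi{x}{A}{B}$ and the premises provide $\Gamma\vdash_{\Sigma,\cR} A:\TYPE$ and $\Gamma,x:A\vdash_{\Sigma,\cR} B:s$, so (prod) gives $\Gamma\vdash_{\Sigma,\cR}\tpi{x}{A}{B}:s$. If it is (app), with $t=t'~u$ and $A=(u/x)B$ where $\Gamma\vdash_{\Sigma,\cR} t':\tpi{x}{A'}{B}$ and $\Gamma\vdash_{\Sigma,\cR} u:A'$: by the induction hypothesis applied to the first premise, either $\tpi{x}{A'}{B}=\KIND$ (impossible, a product is not a sort) or $\Gamma\vdash_{\Sigma,\cR}\tpi{x}{A'}{B}:s$; by the second statement of the lemma (which I would in fact prove first, or invoke as already available since it only uses (prod)-inversion) we get $\Gamma,x:A'\vdash_{\Sigma,\cR} B:s'$, and then a substitution lemma for typing gives $\Gamma\vdash_{\Sigma,\cR}(u/x)B:s'$. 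Finally, if it is (conv), with $\Gamma\vdash_{\Sigma,\cR} t:A'$, $A'\equiv_{\beta\cR}A$ and $\Gamma\vdash_{\Sigma,\cR} A:s$, then the conclusion $\Gamma\vdash_{\Sigma,\cR} A:s$ is literally the second premise.

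For the second statement, suppose $\Gamma\vdash_{\Sigma,\cR}\tpi{x}{A}{B}:s$. I argue by induction on this derivation. The only rules that can conclude a judgement whose subject is a product are (prod) and (conv). If the last rule is (prod), its left premise is exactly $\Gamma\vdash_{\Sigma,\cR} A:\TYPE$. If the last rule is (conv), the subject $\tpi{x}{A}{B}$ is unchanged and we have $\Gamma\vdash_{\Sigma,\cR}\tpi{x}{A}{B}:s'$ for some sort $s'$ with $s'\equiv_{\beta\cR}s$; by the induction hypothesis this already yields $\Gamma\vdash_{\Sigma,\cR} A:\TYPE$.

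The only genuinely delicate points are the auxiliary weakening and substitution lemmas for typing used in the (var), (const), and (app) cases; these are entirely routine inductions on derivations and I would either cite a standard reference (e.g.\ \cite{blanqui01phd}) or treat them as folklore, since they do not depend on the rewriting rules $\cR$ beyond the bookkeeping already present in the framework. I expect no real obstacle: both statements are syntactic inversion/subject-structure facts, and the presence of $\equiv_{\beta\cR}$ only enters through the (conv) case, where it is harmless because the subject term is not modified.
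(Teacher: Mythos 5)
The paper does not actually prove this lemma: it is ``recalled'' as a basic, well-known property of \lpr{} (the kind of statement one finds in standard treatments of PTS-like systems, e.g.\ \cite{blanqui01phd}), so there is no proof in the paper to compare yours against. Your proof is the standard one --- induction on the typing derivation, with weakening and substitution as auxiliary lemmas --- and the overall strategy is sound.

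Two technical points should be tightened if this were written out in full. First, in the (app) case you invoke ``the second statement of the lemma'' to obtain $\Gamma,x:A'\vdash_{\Sigma,\cR} B:s'$, but the second statement as given only yields $\Gamma\vdash_{\Sigma,\cR} A':\TYPE$; what you actually need is the full generation property for products (domain \emph{and} codomain typing), which is proved by the same induction but must be stated as such. Second, in the (conv) case of the second statement, the premise of (conv) gives $\Gamma\vdash_{\Sigma,\cR}\tpi{x}{A}{B}:A''$ where $A''$ is merely \emph{convertible} to the sort $s$, not necessarily a sort syntactically, so the induction hypothesis in the form ``the type is a sort'' does not literally apply; the usual fix is to prove the stronger statement ``if $\Gamma\vdash_{\Sigma,\cR}\tpi{x}{A}{B}:C$ for any $C$, then $\Gamma\vdash_{\Sigma,\cR} A:\TYPE$,'' for which the (conv) case is immediate since the subject is unchanged. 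Neither point threatens the argument; both are routine adjustments of the induction statement.
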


\section{The theory $\cU$}\label{sec-u}

Let us now present the system $\cU$ which is formed with axioms expressed in
\lpr{}. We will prove in Theorem \ref{lem-frag-U} that this system is indeed a theory.

\subsection{Object-terms}

The notions of term, proposition, and proof are not primitive in
\lpr{}. The first axioms of the theory $\cU$ introduce these
notions. We first define a notion analogous to the Predicate logic
notion of term, to express the objects the theory speaks about, such
as the natural numbers.  As all expressions in \lpr{} are called
``terms'', we shall call these expressions ``object-terms'', to
distinguish them from the other terms.

To build the notion of object-term in \lpr{} we
declare a constant $\I$ of type $\TYPE$
\begin{leftbar}
$\I:\TYPE$ \hfill ($\I$-decl)
\end{leftbar}

\noindent
and constants of type $\I \ra
... \ra \I \ra \I$ for the function symbols, for instance a constant $0$
of type $\I$ and a constant \textit{succ} of type $\I \ra \I$.
The
object-terms, for instance $(\mbox{\textit{succ}}~(\mbox{\textit{succ}}~0))$ and
$(\mbox{\textit{succ}}~x)$, are then just \lpr{} terms of type $\I$
and, in an object-term, the variables are \lpr{} variables of
type $\I$.  If we wanted to have object-terms of several sorts, like in
Many-sorted predicate logic, we could just declare several constants
$I_1$, $I_2$, ..., $I_n$ of type $\TYPE$.

\subsection{Propositions}

Just like \lpr{} does not contain a primitive notion of object-term,
it does not contain a primitive notion of proposition, but tools to
define this notion. To do so, in the theory $\cU$, we declare a
constant $\Prop$ of type $\TYPE$

\begin{leftbar}
$\Prop: \TYPE$ \hfill ($\Prop$-decl)
\end{leftbar}

\noindent
and predicate symbols, that is constants of type $\I \ra ...  \ra \I
\ra \Prop$, for instance a constant $\pos$ of type $\I \ra
\Prop$. Propositions are then \lpr{} terms, such as
$(\mbox{\textit{positive}}~(\mbox{\textit{succ}}~(\mbox{\textit{succ}}~0)))$,
of type $\Prop$.

\subsection{Implication}

In the theory $\cU$, we then declare a constant for implication

\begin{leftbar}
$\imp : \Prop \ra \Prop \ra \Prop$\quad(written infix) \hfill ($\imp$-decl)
\end{leftbar}

\noindent
and we can then construct the term $(\pos~({\textit{succ}}~({\textit{succ}}~0))) \imp
(\pos~({\textit{succ}}~({\textit{succ}}~0)))$ of type $\Prop$.

\subsection{Proofs}

Predicate logic defines a language for terms and propositions,
but proofs have to be defined in a second step, for instance as
derivations in natural deduction, sequent calculus, etc. These
derivations, like object-terms and propositions, are trees. Therefore,
they can be represented as \lpr{} terms.

Using the Brouwer-Heyting-Kolmogorov interpretation, a proof of the
proposition $A \imp B$ should be a \lpr{} term expressing a function
mapping proofs of $A$ to proofs of $B$.  Then, using the Curry-de
Bruijn-Howard correspondence, the type of this term should be the
proposition $A \imp B$ itself. But, this is not possible in the theory
$\cU$ yet, as the proposition $A \imp B$ has the type $\Prop$, and not
the type $\TYPE$.

A strict view on the Curry-de Bruijn-Howard correspondence leads to
identify $\Prop$ and $\TYPE$, yielding the conclusion that all types,
including $\I$, and $\Prop$ itself, are propositions. A more moderate
view leads to the introduction of an embedding $\Prf$ of propositions into
types, mapping each proposition $A$ to the type $\Prf~A$ of its
proofs.  This view is moderate in two respects. First, the
propositions themselves are not the types of their proofs: if $t$ is a
proof of $A$, then it does not have the type $A$, but the type
$\Prf~A$. Second, this embedding is not surjective. So not all types
are types of proofs, in particular $\I$ and $\Prop$ are not.

So, in the theory $\cU$, we declare a constant $\Prf$

\begin{leftbar}
$\Prf:\Prop \ra \TYPE$ \hfill ($\Prf$-decl)
\end{leftbar}

\noindent
When assigning the type $\Prop \ra \TYPE$ to the constant $\Prf$, we
use the fact that \lpr{} supports dependent types, that is the
possibility to build a family of types $\Prf~x$ parameterized with
a variable $x$ of type $\Prop$, where $\Prop$ is itself of type $\TYPE$.

According to the Brouwer-Heyting-Kolmogorov interpretation, a proof of
$A \imp A$ is a \lpr{} term expressing a function mapping proofs of
$A$ to proofs of $A$, so that it can be both built and used as a
function. In particular, the identity function $\tabs{x}{\Prf~A}{x}$
mapping each proof of $A$ to itself is a proof of $A \imp A$.
According to the Curry-de Bruijn-Howard correspondence, this term
should have the type $\Prf~(A \imp A)$, but it has the type $\Prf~A
\ra \Prf~A$.  So, the types $\Prf~(A \imp A)$ and $\Prf~A \ra \Prf~A$
must be identified.  To do so, we use the fact that \lpr{} allows the
declaration of rewriting rules, so that $\Prf~(A \imp A)$ rewrites to
$\Prf~A \ra \Prf~A$.

\begin{leftbar}
$\Prf~(x \imp y) ~\lra~ \Prf~x \ra \Prf~y$ \hfill ($\imp$-red)
\end{leftbar}

This rule expresses the meaning of the constant $\imp$. It is, in
\lpr{}, the expression of the Brouwer-Heyting-Kolmogorov
interpretation of proofs for implication: a proof of $x \imp y$ is a
function mapping proofs of $x$ to proofs of $y$.  So, in the theory
$\cU$, the Brouwer-Heyting-Kolmogorov interpretation of proofs for
implication is made explicit: it is the rule ($\imp$-red).

\subsection{Universal quantification}

Unlike implication, the universal quantifier binds a
variable. Thus, we express the proposition $\forall z~A$ as the
proposition $\forall~(\tabs{z}{\I}{A})$
\cite{Church40,NadathurMiller,Paulson,HHP}, yielding the type
  $\I \ra \Prop$ for the argument of $\forall$, hence the
type $(\I \ra \Prop) \ra \Prop$ for the constant
$\forall$ itself.

But, in the theory $\cU$, we allow quantification, not only over the
variables of type $\I$, but over variables of any type of
object-terms. We could introduce a different quantifier for each type
of object-terms, for instance two quantifiers of type $(I_1 \ra \Prop)
\ra \Prop$ and $(I_2 \ra \Prop) \ra \Prop$ if we had two types $I_1$
and $I_2$ of object-terms. But, as in some cases, we will have an infinite
number of types of object-terms, this would require the introduction of an
infinite number of constants.

Thus, we rather want to have a single generic quantifier. But we
cannot give the type $\tpi{X}{\TYPE}{(X \ra \Prop) \ra \Prop}$ to this
quantifier, first because in \lpr{} there is no way to quantifiy over
a variable of type $\TYPE$, but also because this would introduce the
possibility to quantify over $\Prop$ and all the types of the form
$\Prf~A$, while we do not always want to consider these types as
types of object-terms.

Therefore, in
  the theory $\cU$, we declare a constant $\Set$ of
  type $\TYPE$ for the types of object-terms

\begin{leftbar}
$\Set:\TYPE$ \hfill ($\Set$-decl)
\end{leftbar}

\noindent
a constant $\bliota$ of type $\Set$

\begin{leftbar}
$\bliota:\Set$ \hfill ($\bliota$-decl)
\end{leftbar}

\noindent
a constant $\El$ to embed the terms of type $\Set$ into terms
of type $\TYPE$

\begin{leftbar}
$\El:\Set \ra \TYPE$ \hfill ($\El$-decl)
\end{leftbar}

\noindent

and a rule that reduces the term $\El~\bliota$ to $\I$
\begin{leftbar}
$\El~\bliota ~\lra~ \I$ \hfill ($\bliota$-red)
\end{leftbar}

\noindent
The types of object-terms then have the form $\El~A$ and are
distinguished among the other terms of type $\TYPE$.

We can now give the type $\tpi{x}{\Set}{(\El~x \ra \Prop) \ra \Prop}$ to the
generic universal quantifier
\begin{leftbar}
$\fa:\tpi{x}{\Set}{(\El~x \ra \Prop) \ra \Prop}$ \hfill ($\fa$-decl)
\end{leftbar}

\noindent
and write $\fa~\bliota~(\tabs{z}{\I}{A})$ for the proposition $\forall
z~A$.

Just like for the implication, we declare a rewriting rule expressing
that the type of the proofs of the proposition $\fa~x~p$ is the type
of functions mapping each $z$ of type $\El~x$ to a proof of $p~z$

\begin{leftbar}
$\Prf~(\fa~x~p) ~\lra~ \tpi{z}{\El~x}{\Prf~(p~z)}$ \hfill ($\fa$-red)
\end{leftbar}

Again, the Brouwer-Heyting-Kolmogorov interpretation of proofs for the
universal quantifier is made explicit: it is this rule ($\fa$-red).

\subsection{Other constructive connectives and quantifiers}

The other connectives and quantifiers are defined \textit{\`a la} Russell.
For the conjunction, for example,
$\Prf~(x \conj y)$ is defined as 
$\tpi{z}{\Prop}{(\Prf~x \ra \Prf~y \ra \Prf~z) \ra \Prf~z}$.
This definition does not use the quantifier $\fa$ of the theory
$\cU$ (so far, in the theory $\cU$, we can
quantify over the type $\I$, but not over the type $\Prop$),
but the quantifier $\Pi$ of the logical framework \lpr{} itself.

Remark that, \textit{per se}, the quantification on the variable $z$ of
type $\Prop$ is predicative, as the term $\tpi{z}{\Prop}{(\Prf~x \ra
  \Prf~y \ra \Prf~z) \ra \Prf~z}$ has type $\TYPE$ and not $\Prop$.
But, the rule rewriting $\Prf~(x \conj y)$ to $\tpi{z}{\Prop}{(\Prf~x
  \ra \Prf~y \ra \Prf~z) \ra \Prf~z}$ introduces some impredicativity,
as the term $x \conj y$ of type $\Prop$ is ``defined'' as the inverse image,
for the embedding $\Prf$, of the type $\tpi{z}{\Prop}{(\Prf~x \ra
  \Prf~y \ra \Prf~z) \ra \Prf~z}$, that contains a quantification on a
variable of type $\Prop$

\begin{leftbar}
$\bltop:\Prop$ \hfill ($\bltop$-decl)

$\Prf~\bltop ~\lra~ \tpi{z}{\Prop}{\Prf~z \ra \Prf~z}$ \hfill ($\bltop$-red)

\smallskip

$\blbot:\Prop$ \hfill ($\blbot$-decl)

$\Prf~\blbot ~\lra~ \tpi{z}{\Prop}{\Prf~z}$ \hfill ($\blbot$-red)

\smallskip

$\blneg:\Prop \ra \Prop$ \hfill ($\blneg$-decl)

$\Prf~(\blneg x) ~\lra~ \Prf~x \ra \tpi{z}{\Prop}{\Prf~z}$ \hfill ($\blneg$-red)

\smallskip

$\conj:\Prop \ra \Prop \ra \Prop$ \quad(written infix)\hfill ($\conj$-decl)

$\Prf~(x \conj y)~\lra~ \tpi{z}{\Prop}{(\Prf~x \ra \Prf~y \ra \Prf~z)
  \ra \Prf~z}$ \hfill ($\conj$-red)

\smallskip

$\disj:\Prop \ra \Prop \ra \Prop$ \quad(written infix)\hfill
($\disj$-decl)

$\Prf~(x \disj y)~\lra~ \tpi{z}{\Prop}{(\Prf~x \ra \Prf~z) \ra (\Prf~y
  \ra \Prf~z) \ra \Prf~z}$\hfill ($\disj$-red)

\smallskip

$\ex:\tpi{a}{\Set}{(\El~a \ra \Prop) \ra \Prop}$ \hfill ($\ex$-decl)

$\Prf~(\ex~a~p)~\lra~ \tpi{z}{\Prop}{(\tpi{x}{\El~a}{\Prf~(p~x) \ra
    \Prf~z}) \ra \Prf~z}$\hfill ($\ex$-red)

\end{leftbar}

\subsection{Infinity}

Now that we have the symbols $\bltop$ and $\blbot$, we can express
that the type $\I$ is infinite, that is, that there
exists a non-surjective injection from this type to itself. We
call this non-surjective injection $\S$. To express its injectivity,
we introduce its left inverse $\pred$. To express its non-surjectivity, we introduce an element $\0$, that is not in its image
$\positive$ \cite{DowekWernerPA}.
This choice of notation enables the definition of natural numbers as some
elements of type $\I$

\begin{leftbar}
  $\0:\I$ \hfill ($\0$-decl)

\smallskip

$\S:\I \ra \I$ \hfill ($\S$-decl)

\smallskip

$\pred:\I \ra \I$ \hfill ($\pred$-decl)

$\pred~\0 ~\lra~ \0$ \hfill ($\pred$-red1)

$\pred~(\S~x) ~\lra~ x$ \hfill ($\pred$-red2)

\smallskip

$\positive:\I \ra \Prop$ \hfill ($\positive$-decl)

$\positive~\0 ~\lra~ \blbot$ \hfill ($\positive$-red1)

$\positive~(\S~x) ~\lra~ \bltop$ \hfill ($\positive$-red2)
\end{leftbar}

\subsection{Classical connectives and quantifiers}

The disjunction in constructive logic and in classical logic are
governed by different deduction rules. As the deduction rules express
the meaning of the connectives and quantifiers, we can conclude that the
disjunction in constructive logic and in classical logic have
different meanings.  If these disjunctions have different meanings,
they should be expressed with different symbols, for instance $\vee$ for
the constructive disjunction and $\vee_c$ for the classical one, just
like, in classical logic, we use two different symbols for the
inclusive disjunction and the exclusive one.

The constructive and the classical disjunction need not belong to
different languages, but they can coexist in the same.  Ecumenical
logics \cite{Prawitz15,Dowek15,PereiraRodriguez17,Grienenberger19}
are logics where
the constructive and classical connectives and quantifiers coexist.  A
proposition whose connectors and quantifiers are all constructive, is
said to be ``purely constructive'', and one whose connectors and
quantifiers are all classical, is said to be ``purely classical''.
The others are said to be ``mixed propositions''.  Any deductive
system, where a purely constructive proposition is provable if and
only if it is provable in Constructive predicate logic, and where a
purely classical proposition is provable if and only if it is provable
in Classical predicate logic, is Ecumenical. Ecumenical logics may of
course differ on mixed propositions.

Many Ecumenical logics consider the constructive connectives and
quantifiers as primitive and attempt to define the classical ones from
them, using the negative translation as a definition. There are
several options: the classical disjunction, for instance, can
be defined in any of the following ways:
\begin{enumerate}
\item $A \vee_c B = \neg \neg (A \vee B)$
\item $A \vee_c B = (\neg \neg A) \vee (\neg \neg B)$
\item $A \vee_c B = \neg \neg ((\neg \neg A) \vee (\neg \neg B))$
\end{enumerate}
and similarly for the other connectives and quantifiers.

Using these definitions, the proposition $(P \wedge_c Q) \Rightarrow_c P$
is then:
\begin{enumerate}
\item $\neg \neg ((\neg \neg (P \wedge Q)) \Rightarrow P)$
\item $(\neg \neg ((\neg \neg P) \wedge (\neg \neg Q)))
\Rightarrow (\neg \neg P)$
\item  $\neg \neg ((\neg \neg \neg \neg ((\neg \neg P) \wedge (\neg \neg Q)))
  \Rightarrow (\neg \neg P))$
\end{enumerate}
None of them is exactly the negative translation of $(P \wedge Q)
\Rightarrow P$ that is
$$\neg \neg ((\neg \neg ((\neg \neg P) \wedge (\neg \neg Q)))
\Rightarrow (\neg \neg P))$$

With Definition (1), the double negations on atomic propositions are
missing.
This can be repaired in two ways.
Predicate symbols of the language can be duplicated~\cite{Prawitz15}
into a constructive and a classical counterpart, the latter being the
the double negation of the former.
Or the syntax of predicate logic can be modified~\cite{Gilbert}.
First, terms are defined.
Then,
atoms are defined as terms of the form $P(t_1, \ldots, t_n)$
where $P$ is a predicate symbol and $t_1, \ldots, t_n$ are terms.
Finally, propositions are defined as either explicitly embedded atoms,
conjunctions of two propositions, etc.
Atoms can be constructively embedded into propositions
with the symbol $\rhd$ or classically with a
double-negation version of $\rhd$.
This way, the proposition above is now written
$(\rhd_c P \wedge_c~\rhd_c Q) \Rightarrow_c \rhd_c P$, which is, by
definition, equal to $\neg \neg ((\neg \neg ((\neg \neg \rhd P) \wedge
(\neg \neg \rhd Q))) \Rightarrow (\neg \neg \rhd P))$.

With Definition (2) \cite{AllaliHermant}, the double negation at the
root of the proposition is missing.  This again can be repaired
\cite{Grienenberger19}
by modifying the syntax of Predicate logic,
defining first terms, then pre-propositions, that are defined like the
propositions in Predicate logic and then propositions, a proposition
being obtained by applying a symbol $\circ_c$ to a pre-proposition.
Again, this symbol has also a classical version defined as the double
negation of the constructive one. This way, the proposition above is
written $\circ_c ((P \wedge_c Q) \Rightarrow_c P)$ and this proposition
is, by definition, equal to $\neg \neg ((\neg \neg ((\neg \neg P)
\wedge (\neg \neg Q))) \Rightarrow ( \neg \neg P))$.

Definition (3) \cite{Dowek15} is closer to the negative translation
except that, in some places, the two negations are replaced with
four. But, as $\neg \neg \neg A$ is equivalent to $\neg A$, these
extra negations can be removed. Yet, a classical atomic proposition
$P$ is the same as its constructive version, while its negative
translation is $\neg \neg P$, and in (1) $P_c$ or $\rhd_c P$ is
equal to $\neg \neg P$, as well as $\circ_c P$ in (2).  As atomic
propositions are not provable anyway, this does not affect
provability. But it affects hypothetical provability, leading to
duplicate the notion of entailment.

In the theory $\cU$, we use Definition (2). Indeed, as we
already have a distinction between the proposition $A$ and the type
$\Prf~A$ of its proofs, we can just include the symbol $\circ_{c}$
into the constant $\Prf$ introducing a classical version
$\Prfc$ of this constant

\begin{leftbar}
$\Prfc:\Prop \ra \TYPE$ \hfill ($\Prfc$-decl)

$\Prfc ~\lra~ \tabs{x}{\Prop}{\Prf~(\blneg \blneg x)}$ \hfill ($\Prfc$-red)
\end{leftbar}

We can then define the classical connectives and quantifiers as follows

\begin{leftbar}
$\impc:\Prop \ra \Prop \ra \Prop$ \quad(written infix)\hfill ($\impc$-decl)

$\impc ~\lra~ \tabs{x}{\Prop}{\tabs{y}{\Prop}{(\blneg \blneg x) \imp
      (\blneg \blneg y)}}$ \hfill ($\impc$-red)

\smallskip

$\conjc:\Prop \ra \Prop \ra \Prop$ \quad(written infix)\hfill
($\conjc$-decl)

$\conjc ~\lra~ \tabs{x}{\Prop}{\tabs{y}{\Prop}{(\blneg \blneg x) \conj
    (\blneg \blneg y)}}$ \hfill ($\conjc$-red)

\smallskip

$\disjc:\Prop \ra \Prop \ra \Prop$ \quad(written infix)\hfill
($\disjc$-decl)

$\disjc ~\lra~ \tabs{x}{\Prop}{\tabs{y}{\Prop}{(\blneg \blneg x) \disj
    (\blneg \blneg y)}}$ \hfill ($\disjc$-red)

\smallskip

$\fac : \tpi{a}{\Set}{(\El~a \ra \Prop) \ra \Prop}$ \hfill
($\fac$-decl)

$\fac ~\lra~ \tabs{a}{\Set}{\tabs{p}{(\El~a \ra
    \Prop)}{\fa~a~(\tabs{x}{\El~a}{\blneg \blneg (p~x))}}}$ \hfill
($\fac$-red)

\smallskip

$\exc : \tpi{a}{\Set}{(\El~a \ra \Prop) \ra \Prop}$ \hfill
($\exc$-decl)

$\exc ~\lra~ \tabs{a}{\Set}{\tabs{p}{(\El~a \ra
    \Prop)}{\ex~a~(\tabs{x}{\El~a}{\blneg \blneg (p~x))}}}$ \hfill
($\exc$-red)
\end{leftbar}

Note that $\top_c$ and $\bot_c$ are $\top$ and $\bot$, by definition.
Note also that $\neg \neg \neg A$ is equivalent to $\neg A$, so
we do not need to duplicate negation either.

\subsection{Propositions as objects}

So far, we have mainly reconstructed the Predicate logic notions
of object-term, proposition, and proof.  We can now turn to two
notions coming from Simple type theory: propositions as objects and
functionality.

Simple type theory is often presented as an independent system, but it
can be expressed in several logical frameworks, such as Predicate
logic, Isabelle, Deduction modulo theory, Pure
type systems, and also \lpr{}. Yet, the relation between
Predicate logic and Simple type theory is complex because
\begin{itemize}
\item Simple type theory can be expressed in Predicate logic
\item and Predicate logic is a restriction of Simple type theory, allowing
quantification on variables of type $\iota$ only.
\end{itemize}

So, in Predicate logic, we can express Simple type theory, that
contains, as a restriction, Predicate logic, in which we can express
Simple type theory, that contains, as a restriction, Predicate logic, in
which we can express Simple type theory, that contains, etc. Stacking
encodings in this way leads to nonsensical expressions of Simple type
theory. But this remark shows that, after having reconstructed
Predicate logic in \lpr{}, we have a choice: we can either express
Simple type theory in Predicate logic, that is itself expressed in
\lpr{}, or express Simple type theory directly in \lpr{}, letting
Predicate logic be, a posteriori, a restriction of it, that is, build
Simple type theory, not in Predicate logic, but as an extension of
Predicate logic.

In the theory $\cU$, we choose the second option that leads to a
simpler expression of Simple type theory, avoiding the stacking of two
encodings.  Simple type theory is thus expressed by adding two axioms
on top of Predicate logic: one for propositions as objects and one for
functionality.

Let us start with propositions as objects.  So far, the term $\bliota$
is the only closed term of type $\Set$.  So, we can only quantify over
the variables of type $\El~\bliota$, that is $\I$.  In particular, we cannot
quantify over propositions.  To do so, we just need to declare a constant
$\o$ of type $\Set$

\begin{leftbar}
$\o:\Set$ \hfill ($\o$-decl)
\end{leftbar}

\noindent
and a rule identifying $\El~\o$ and $\Prop$

\begin{leftbar}
$\El~\o ~\lra~ \Prop$ \hfill ($\o$-red)
\end{leftbar}

Note that just like there are no terms of type $\bliota$, but terms,
such as $\0$, which have type $\El~\bliota$, that is $\I$, there are
no terms of type $\o$, but terms, such as $\bltop$, which have type
$\El~\o$, that is $\Prop$.

Applying the constant $\fa$ to the constant $\o$, we obtain a term of
type $(\El~\o \ra \Prop) \ra \Prop$, that is $(\Prop \ra \Prop) \ra
\Prop$, and we can express the proposition $\forall p~(p \Rightarrow
p)$ as $\fa~\o~(\tabs{p}{\Prop}{p \imp p})$.  The type
$\Prf~(\fa~\o~(\tabs{p}{\Prop}{p \imp p}))$ of the proofs of this
proposition rewrites to $\tpi{p}{\Prop}{\Prf~p \ra \Prf~p}$. So, the
term $\tabs{p}{\Prop}{\tabs{x}{\Prf~p}{x}}$ is a proof of this
proposition.

\subsection{Functionality}

Besides $\bliota$ and $\o$, we introduce more types in the theory, for
functions and sets. To do so, we declare a constant
\begin{leftbar}
$\arr:\Set \ra \Set \ra \Set$ \quad(written infix)\hfill ($\arr$-decl)
\end{leftbar}

\noindent
and a rewriting rule

\begin{leftbar}
$\El~(x \arr y) ~\lra~ \El~x \ra \El~y$ \hfill ($\arr$-red)
\end{leftbar}

For instance, these rules enable the construction of the \lpr{} term $\bliota
\arr \bliota$ of type $\Set$ that expresses the simple type $\iota \ra
\iota$. The \lpr{} term $\El~(\bliota \arr \bliota)$ of type $\TYPE$
rewrites to $\I \ra \I$. The simply typed term
$\tabs{x}{\iota}{x}$ of type $\iota \ra \iota$ is then expressed as
the term $\tabs{x}{\I}{x}$ of type $\I \ra
\I$ that is, $\El~(\bliota \arr \bliota)$.

\subsection{Dependent arrow}\label{sec:dependent-function-types}

The axiom ($\arr$) enables us to give simple types to the object-terms
expressing functions. We can also give them dependent types, with the
dependent versions of this axiom

\begin{leftbar}
$\arrd : \tpi{x}{\Set}{(\El~x \ra \Set) \ra \Set}$\quad(written infix)
  \hfill ($\arrd$-decl)

$\El~(x \arrd y) ~\lra~ \tpi{z}{\El~x}{\El~(y~z)}$ \hfill
  ($\arrd$-red)
\end{leftbar}

Note that, if we apply the constant $\arrd$ to a term $t$ and a term
$\tabs{z}{\El~t}{u}$, where the variable $z$ does not occur in $u$,
then $\El~(t \arrd \tabs{z}{\El~t}{u})$ rewrites to $\El~t \ra \El~u$,
just like $\El~(t \arr u)$. Thus, the constant $\arrd$ is useful only
if we can build a term $\tabs{z}{\El~t}{u}$ where the variable $z$
occurs in $u$. With the symbols we have introduced so far, this is not
possible. The only constants that can be used to build a term of type
$\Set$ are $\bliota$, $\o$, $\arr$, and $\arrd$, and the variable $z$
cannot occur free in a term built from these four constants and a
variable $z$ of type $\El~t$.

Just like we have a constant $\bliota$ of type $\Set$, we could add a
constant \textit{array} of type $\I \ra \Set$ such that
$\mbox{\textit{array}}~n$ is the type of arrays of length $n$.  We could
then construct the term $(\bliota \arrd
\tabs{n}{\I}{\mbox{\textit{array}}~n})$ of type $\Set$.
Then, the type $\El~(\bliota \arrd
  \tabs{n}{\I}{\mbox{\textit{array}}~n})$ that rewrites to
$\tpi{n}{\I}{\El~(\mbox{\textit{array}}~n)}$, would be the type of
functions mapping a natural number $n$ to an array of length $n$.

So, this symbol $\arrd$ becomes useful, only if we add such a
constant \textit{array}, object-level dependent types, or the symbols
$\blpi$ or $\psub$ below.

\subsection{Dependent implication}

In the same way, we can add a dependent implication, where, in the
proposition $A \Rightarrow B$, the proof of $A$ may occur in $B$

\begin{leftbar}
$\impd : \tpi{x}{\Prop}{(\Prf~x \ra \Prop) \ra \Prop}$ \quad(written
  infix) \hfill ($\impd$-decl)

$\Prf~(x \impd y)~\lra~\tpi{z}{\Prf~x}{\Prf~(y~z)}$ \hfill
  ($\impd$-red)
\end{leftbar}

\subsection{Proofs in object-terms}

To construct an object-term, we sometimes want to apply a function
symbol to other object-terms and also to proofs. For instance, we may
want to apply the Euclidean division {\textit{div}} to two numbers $t$ and
$u$ and to a proof that $u$ is positive.

We would like the type of {\textit{div}} to be something like
$$\El~(\bliota \arr \bliota \arrd \tabs{y}{\I}~{(\positive~y \arr \bliota)})$$
But the term $(\positive~y \arr \bliota)$ is not well typed, as the
constant $\arr$ expects, as a first argument, a term of type $\Set$
and not of type $\Prop$, that is, a type of object-terms and not of proofs.

Thus, we must declare another constant

\smallskip

\noindent
$\blpi:\Prop \ra \Set \ra \Set$

\smallskip

\noindent
and a rewriting rule

\smallskip

\noindent
$\El~(\blpi~x~y) ~\lra~ (\Prf~x) \ra (\El~y)$

\smallskip

\noindent
Just like for the constant $\arr$ and $\imp$, we can also have a
dependent version of this constant. In fact, in the theory $\cU$,
we only have this dependent version

\begin{leftbar}
$\blpi : \tpi{x}{\Prop}{(\Prf~x \ra \Set) \ra \Set}$ \hfill
  ($\blpi$-decl)

$\El~(\blpi~x~y)~\lra~\tpi{z}{\Prf~x}{\El~(y~z)}$ \hfill ($\blpi$-red)
\end{leftbar}

\noindent
This way, we can give, to the constant {\textit{div}}, the type
$$\El~(\bliota \arr \bliota \arrd
\tabs{y}{\I}{\blpi~(\positive~y)~(\tabs{z}{\Prf~(\positive~y)}{\bliota})})$$
that is
$$\I \ra \tpi{y}{\I}{\Prf~(\positive~y) \ra \I}$$

In the same way, if we add a symbol \(=\) of type
\(\tpi {x}{\Set}{\tapp{\El}{x} \ra \tapp{\El}{x} \ra \Prop}\), we can express
the proposition
$$\positive~y \impd
\tabs{p}{\Prf~(\positive~y)}{(=~\bliota~(\mbox{\textit{div}}~x~y~p)~(\mbox{\textit{div}}~x~y~p))}$$
enlightening the meaning of the proposition
  usually written
$$y > 0 \Rightarrow x/y = x/y$$

The proposition $x/y = x/y$ is well-formed, but it contains, besides
$x$ and $y$, an implicit free variable $p$, for a proof of $y >
0$. This variable is bound by the implication, that needs therefore to
be a dependent implication. Hence, the only free variables in
$y > 0 \Rightarrow x/y = x/y$ are $x$ and $y$.

\subsection{Proof irrelevance}\label{sec:proof-irrelevance}

If $p$ and $q$ are two non convertible proofs of the proposition
$\positive~2$, the terms $\mbox{\textit{div}}~7~2~p$ and
$\mbox{\textit{div}}~7~2~q$ are not convertible.
As a consequence, the proposition
$$=~\bliota~(\mbox{\textit{div}}~7~2~p)~(\mbox{\textit{div}}~7~2~q)$$
would not be provable.

To make these terms convertible, we embed the theory into an extended
one, that contains another constant
$$\mbox{\textit{div}}^{\dagger}:\El~(\bliota \arr \bliota \arr \bliota)$$
and a rule
$$\mbox{\textit{div}}~x~y~p ~\lra~ \mbox{\textit{div}}^{\dagger}~x~y$$ and we
define convertibility in this extended theory.  This way, the terms
$\mbox{\textit{div}}~7~2~p$ and $\mbox{\textit{div}}~7~2~q$ are convertible,
as they both reduce to $\mbox{\textit{div}}^{\dagger}~7~2$.

Note that, in the extended theory, the constant $\mbox{\textit{div}}^{\dagger}$ enables the construction of the erroneous term $\mbox{\textit{div}}^{\dagger}~1~0$.  But the extended theory is only used to
define the convertibility in the restricted one and this term is not a
term of the restricted theory. It is not even the reduct of a term of
the form $\mbox{\textit{div}}~1~0~r$ \cite{ferey19irrelevance,blanqui20types}.

\subsection{Dependent pairs and predicate subtyping}

Instead of declaring a constant \textit{div} that takes three
arguments: a number $t$, a number $u$, and a proof $p$ that $u$ is
positive, we can declare a constant that takes two arguments: a number
$t$ and a pair $\pair~\bliota~\positive~u~p$ formed with a number $u$
and a proof $p$ that $u$ is positive.

The type of the pair $\pair~\bliota~\positive~u~p$ whose first element
is a number and the second a proof that this number is positive is
written $\psub~\bliota~\positive$, or informally $\{x:\bliota \mid
\positive~x\}$. It can be called ``the type of positive numbers'',
especially if the pair is proof-irrelevant in its second argument. It
is a subtype of the type of natural numbers defined with the predicate
$\positive$. Therefore, the symbol $\psub$ introduces predicate
subtyping.

We thus declare a constant $\psub$ and a constant $\pair$

\begin{leftbar}
$\psub: \tpi{t}{\Set}{(\El~t \ra \Prop) \ra \Set}$ \hfill ($\psub$-decl)

\smallskip

$\pair: \tpi{t}{\Set}{\tpi{p}{\El~t \ra \Prop}{\tpi{m}{\El~t}{\Prf~(p~m) \ra \El~(\psub~t~p)}}}$\hfill ($\pair$-decl)
\end{leftbar}

This way, instead of giving the type $\El~(\bliota \arr \bliota \arrd
\tabs{y}{\Prf~(\positive~y)}{\bliota})$ to the constant \textit{div},
we can give it the type $\El~(\bliota \arr \psub~\bliota~\positive
\arr \bliota)$.

To avoid introducing a new positive number
$\pair~\bliota~\positive~3~p$ with each proof $p$ that $3$ is
positive, we make this symbol $\pair$ proof
irrelevant by introducing a
symbol $\paird$ and a rewriting rule that discards the proof

\begin{leftbar}
$\paird: \tpi{t}{\Set}{\tpi{p}{\El~t\ra\Prop}{\El~t \ra
      \El~(\psub~t~p)}}$ \hfill ($\paird$-decl)

\smallskip

$\pair~t~p~m~h ~\lra~ \paird~t~p~m$ \hfill ($\pair$-red)
\end{leftbar}

This declaration and this rewriting rule are not part of the theory $\cU$
but of the theory $\cU^{\dagger}$ used to define the conversion on the
terms of $\cU$.

Finally, we declare the projections $\fst$ and $\snd$ together with an
associated rewriting rule

\begin{leftbar}
$\fst: \tpi{t}{\Set}{\tpi{p}{\El~t \ra \Prop}{\El~(\psub~t~p) \ra
      \El~t}}$ \hfill ($\fst$-decl)

$\fst~t~p~(\paird~t'~p'~m) ~\lra~ m$ \hfill ($\fst$-red)

\smallskip

$\snd: \tpi{t}{\Set}{\tpi{p}{\El~t \ra
    \Prop}{\tpi{m}{\El~(\psub~t~p)}{\Prf~(p~(\fst~t~p~m))}}}$\hfill
($\snd$-decl)
\end{leftbar}

Note that the left hand side of the rule ($\fst$-red) is not
well-typed, but it can match a well-typed term
$\fst~A~B~(\paird~A~B~m)$. Yet, we prefer this rule to the non
linear one $\fst~t~p~(\paird~t~p~m) ~\lra~ m$ that would make
confluence proofs more difficult.

Note that there is no rewriting rule for the second projection as the
second element of pairs is discarded during rewriting.

\subsection{Dependent types}

When we have the axioms ($\El\/$), ($\bliota$), and ($\arr$), the type
$\I \ra \I$ of the term $\S$ is equivalent to the type $\El~(\bliota
\arr \bliota)$. Hence, this type $\I \ra \I$ is in the image of the
embedding $\El$.
The symbol \textit{array} introduced in Section~\ref{sec:dependent-function-types}
has the type $\I \ra \Set$ and similarly
if we have a predicate symbol $\leq: \El~ (\bliota~
\arr~ \bliota~ \arr \o)$, the term $\tabs{n}{\I}
{\psub~ \bliota~ (\tabs{m}{\I}{m \leq n})}$ also has the type $\I \ra \Set$.
Unlike the type $\I \ra \I$ the type \(\I \ra \Set\) is not in the image of any
embedding. It is well-formed in the framework but not in the theory itself.

To make this type an element of the image of an embedding we can introduce
dependent types at the level of objects-terms.
To do so, we introduce  a constant $\Sett$ of type
$\TYPE$ and a constant $\set$ of type
$\Sett$
\begin{leftbar}
  $\Sett: \TYPE$ \hfill ($\Sett$-decl)

  $\set: \Sett \hfill$ \hfill ($\set$-decl)
\end{leftbar}
\noindent a new arrow
\begin{leftbar}
  $\dtarr: \tpi{x}{\Set}{(\El~x \ra \Sett) \ra \Sett}$ \hfill ($\dtarr$-decl)
\end{leftbar}
\noindent and an embedding of $\Sett$ into $\TYPE$, similar to the
embeddings $\Prf$ and $\El$
\begin{leftbar}
  $\Ty: \Sett \ra \TYPE$ \hfill ($\Ty$-decl)
\end{leftbar}
\noindent and we identify $\Ty~ \set$ with $\Set$---like $\El~ \o$ is
identified with $\Prop$---and the dependent arrow $\dtarr$ with a
product type
\begin{leftbar}
  $\Ty~ \set \lra \Set$ \hfill ($\set$-red)

  $\Ty~ (x \dtarr y) \lra \tpi{z}{\El~x} \Ty~ (y~z)$ \hfill ($\dtarr$-red)
\end{leftbar}
\noindent The type $\I \ra \Set$, is now equivalent to
$\Ty~ (\bliota \dtarr (\tabs{n}{\I}{\set}))$
and is thus in the image of the embedding $\Ty$.
One could think in simply taking \(\set:\Set\), saving constants \(\Sett\),
\(\Ty\) and \(\dtarr\) and their rewrite rules.
However, such a declaration would encode the product
\((\triangle, \square, \square)\)
of system \(\lambda\mathrm{U}^-\)
which is inconsistent \cite{coquand:inria00076023,10.1007/BFb0014058}.

\subsection{Prenex predicative type quantification in types}

Using the symbols of the theory $\cU$ introduced so far, the
symbol for equality $=$ has the type
$\tpi{x}{\Set}{\tapp{\El}{x} \ra \tapp{\El}{x} \ra \Prop}$
which is not a type of object terms.
This motivates the
introduction of object-level polymorphism
\cite{girard1972interpretation,reynolds1974towards}.  However
extending Simple type theory with object-level polymorphism makes it
inconsistent \cite{10.1007/BFb0014058,coquand:inria00076023}, and
similarly it makes the theory $\cU$ inconsistent.  So, object-level
polymorphism in $\cU$ is restricted to prenex polymorphism.

To do so, we introduce a new constant $\Scheme$ of type $\TYPE$

\begin{leftbar}
$\Scheme:\TYPE$ \hfill ($\Scheme$-decl)
\end{leftbar}

\noindent
a constant $\Els$ to embed the terms of type $\Scheme$ into terms of
type $\TYPE$

\begin{leftbar}
$\Els:\Scheme \ra \TYPE$ \hfill ($\Els$-decl)
\end{leftbar}

\noindent
a constant $\lift$ to embed the terms of type $\Set$ into terms of
type $\Scheme$ and a rule connecting these embeddings

\begin{leftbar}
$\lift:\Set \ra \Scheme$ \hfill ($\lift$-decl)

$\Els~(\lift~x) ~\lra~ \El~x$ \hfill ($\lift$-red)
\end{leftbar}

\noindent
We then introduce a quantifier for the variables of type $\Set$ in the
terms of type $\Scheme$ and the associated rewriting rule

\begin{leftbar}
$\calfa:(\Set \ra \Scheme) \ra \Scheme$ \hfill ($\calfa$-decl)

$\Els~(\calfa~p) ~\lra~ \tpi{x}{\Set}{\Els~(p~x)}$ \hfill ($\calfa$-red)
\end{leftbar}

This way, the type of the identity function is
$\tapp{\Els}{\left(\tapp{\calfa}{\left(\tabs{x}{\Set}{\tapp{\lift}{\left(x\arr
        x\right)}}\right)}\right)}$. It reduces to
$\tpi{x}{\Set}{\El~x \ra \El~x}$. Therefore, it is inhabited by the
term $\tabs{x}{\Set}{\tabs{y}{\tapp{\El}{x}}{y}}$.
In a similar way, the symbol $=$ can then be given the type
$\tapp{\Els}{\left(\tapp{\calfa}{\left(\tabs{x}{\Set}{\tapp{\lift}
      {\left(x \arr x \arr \o \right)}}\right)}\right)}$
that reduces to
$\tpi{x}{\Set}{\tapp{\El}{x} \ra \tapp{\El}{x} \ra \Prop}$.

\subsection{Prenex predicative type quantification in propositions}

When we express the reflexivity of the polymorphic equality, we need
also to quantify over a type variable, but now in a proposition.  To
be able to do so, we introduce another quantifier and its associated
rewriting rule

\begin{leftbar}
$\sffa:(\Set \ra \Prop) \ra \Prop$ \hfill ($\sffa$-decl)

$\Prf~(\sffa~p) ~\lra~ \tpi{x}{\Set}{\Prf~(p~x)}$ \hfill ($\sffa$-red)
\end{leftbar}

\noindent
This way, the reflexivity of equality can be expressed as
$\left(\tapp{\sffa}{\left(\tabs{s}{\Set}{\tapp{\tapp{\fa}{s}}{\left(\tabs{x}{\tapp{\El}{s}}{\tapp{\tapp{\tapp{=}{s}}{x}}{x}}\right)}}\right)}\right)$.

\subsection{The theory $\cU$: putting everything together}

As mentioned in Section \ref{sec-lpmc}, we call ``axiom'' a constant
declaration together with its rewrite rules if any. Hence, in the
following, we denote by ($\bliota$) the axiom consisting of
($\bliota$-decl) and ($\bliota$-red), and similarly for all the other
axioms.

The theory $\cU$ is then formed with 43 axioms: ($\I$), ($\Set$),
($\El$), ($\bliota$), ($\Prop$), ($\Prf$), ($\imp$), ($\fa$),
($\bltop$), ($\blbot$), ($\blneg$), ($\conj$), ($\disj$), ($\ex$),
($\Prfc$), ($\impc$), ($\conjc$), ($\disjc$), ($\fac$), ($\exc$),
($\o$), ($\arr$), ($\arrd$), ($\impd$), ($\blpi$), ($\0$), ($\S$),
($\pred$), ($\positive$), ($\psub$), ($\pair$), ($\paird$), ($\fst$),
($\snd$), ($\Sett$), ($\set$), ($\dtarr$), ($\Ty$), ($\Scheme$),
($\Els$), ($\lift$), ($\calfa$), ($\sffa$).

Note that, strictly speaking, the declaration ($\paird$-decl) and
the rule ($\pair$-red) are not part of the theory $\cU$, but of its
extension $\cU^{\dagger}$ used to define the conversion on the terms
of $\cU$.

Among these axioms, 14 only have a constant declaration, 27 have a
constant declaration and one rewriting rule, and 2 have a constant
declaration and two rewriting rules. So $\Sigma_{\cU}$ contains 43
declarations and $\cR_\cU$ 31 rules.

This large number of axioms is explained by the fact that \lpr{} is a
weaker framework than Predicate logic. The 20 first axioms are
needed just to construct notions that are primitive in Predicate
  logic: terms, propositions, with their 13 constructive and
classical connectives and quantifiers, and proofs.  So the theory
$\cU$ is just 23 axioms on top of the definition of Predicate
  logic.

It is also explained by the fact that axioms are more atomic than in
Predicate logic: 4 axioms are needed to express ``the'' axiom of
infinity: ($\0$), ($\S$), ($\pred$), and ($\positive$); 5 to express
predicate subtyping: ($\psub$), ($\pair$), ($\paird$), ($\fst$), and
($\snd$); 4 to express dependent types:
($\Sett$),($\set$),($\dtarr$),($\Ty$); and 5 to express prenex
polymorphism: ($\Scheme$), ($\Els$), ($\lift$), ($\calfa$), and
($\sffa$). The 5 remaining axioms express propositions as objects
($\o$); various forms of functionality: ($\arr$), ($\arrd$), and
($\blpi$); and dependent implication ($\impd$).

\section{Sub-theories}\label{sec-frag-th}

Not all proofs require all these axioms. Many proofs can be
expressed in sub-theories built by bringing together some of the
axioms of $\cU$, but not all.

Given subsets $\Sigma_\cS$ of $\Sigma_\cU$ and $\cR_\cS$ of $\cR_\cU$,
we would like to be sure that a proof in $\cU$, using only constants
in $\Sigma_\cS$, is a proof in $(\Sigma_\cS,\cR_\cS)$. Such a result is
trivial in Predicate logic: for instance, a proof in ZFC which
does not use the axiom of choice is a proof in ZF, but it is less
straightforward in \lpr, because $(\Sigma_\cS,\cR_\cS)$ might not be a
theory. So we should not consider any pair $(\Sigma_\cS,\cR_\cS)$. For
instance, as $\Set$ occurs in the type of $\El$, if we want $\El$ in
$\Sigma_\cS$, we must take $\Set$ as well.  In the same way, as
$\positive~(\S~x)$ rewrites to $\bltop$, if we want ($\positive$) and
($\S$) in $\Sigma_\cS$, we must include $\bltop$ in $\Sigma_\cS$ and the
rule rewriting $\positive~(\S~x)$ to $\bltop$ in $\cR_\cS$.

This leads to a definition of a notion of sub-theory and to prove that, if
$(\Sigma_1,\cR_1)$ is a sub-theory of a theory $(\Sigma_0,\cR_0)$,
$\Gamma$, $t$ and $A$ are in $\Lambda(\Sigma_1)$, and $\Gamma
\vdash_{\Sigma_0,\cR_0} t:A$, then $\Gamma \vdash_{\Sigma_1,\cR_1}
t:A$.

This property implies that, if $\pi$ is a proof of $A$ in $\cU$ and
both $A$ and $\pi$ are in $\Lambda(\Sigma_1)$, then $\pi$ is a proof
of $A$ in $(\Sigma_1,\cR_1)$, but it does not imply that if $A$ is in
$\Lambda(\Sigma_1)$ and $A$ has a proof in $\cU$, then it has a proof
in $(\Sigma_1,\cR_1)$.

\subsection{Fragments}

\begin{defi}[Fragment]
  A signature $\Sigma_1$ is included in a signature $\Sigma_0$,
  denoted $\Sigma_1 \subseteq \Sigma_0$, if each declaration $c:A$ of
  $\Sigma_1$ is a declaration of $\Sigma_0$.

  A system $(\Sigma_1,\cR_1)$ is a fragment of a system
  $(\Sigma_0,\cR_0)$, if the following conditions are satisfied:
  \begin{itemize}
  \item $\Sigma_1 \subseteq \Sigma_0$ and $\cR_1\subseteq\cR_0$;
  \item for all $(c:A)\in\Sigma_1$, $\const(A)\subseteq|\Sigma_1|$;
  \item for all $\ell\lra r\in\cR_0$, if $\const(\ell)\subseteq|\Sigma_1|$,
    then $\const(r)\subseteq|\Sigma_1|$ and $\ell\lra r\in\cR_1$.
  \end{itemize}
\end{defi}

We write $\vdash_i$ for $\vdash_{\Sigma_i,\cR_i}$, $\lra_i$ for
$\lra_{\beta\cR_i}$, and $\equiv_i$ for $\equiv_{\beta\cR_i}$.

\begin{lem}[Preservation of reduction]
\label{lem-red}
If $(\Sigma_1,\cR_1)$ is a fragment of $(\Sigma_0,\cR_0)$, $t \in
\Lambda(\Sigma_1)$ and $t \lra_0 u$, then $t \lra_1 u$ and $u \in
\Lambda(\Sigma_1)$.
\end{lem}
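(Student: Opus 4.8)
The statement to prove is Lemma \ref{lem-red} (Preservation of reduction): if $\Sigma_1,\cR_1$ is a fragment of $\Sigma_0,\cR_0$, $t\in\Lambda(\Sigma_1)$, and $t\lra_0 u$, then $t\lra_1 u$ and $u\in\Lambda(\Sigma_1)$. Recall that $\lra_0$ abbreviates $\lra_{\beta\cR_0}$, so a one-step reduction $t\lra_0 u$ is either a $\beta$-step or a $\cR_0$-step, in either case performed inside an arbitrary context and under an arbitrary substitution. The plan is to proceed by a straightforward induction on the structure of $t$ (equivalently, on the position of the redex contracted in the step $t\lra_0 u$), splitting on whether the contracted redex is a $\beta$-redex or a $\cR_0$-redex, and treating the "closure under term constructors" cases uniformly by the induction hypothesis.

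**The base cases: redex at the root.** Suppose first the step $t\lra_0 u$ contracts a redex at the head of $t$. If it is a $\beta$-step, then $t = (\tabs{x}{A}{v})\,w$ and $u = (w/x)v$; since $t\in\Lambda(\Sigma_1)$ we have $A,v,w\in\Lambda(\Sigma_1)$, hence $u=(w/x)v\in\Lambda(\Sigma_1)$ (substitution does not introduce new constants), and $t\lra_\beta u$ so $t\lra_1 u$. If instead it is an $\cR_0$-step, then $t = \theta\ell$ and $u = \theta r$ for some rule $\ell\lra r\in\cR_0$ and substitution $\theta$. Because $\ell = c\,\ell_1\cdots\ell_n$ has head constant $c$ and $t = \theta\ell\in\Lambda(\Sigma_1)$, we get $c\in|\Sigma_1|$ and, more precisely, $\const(\ell)\subseteq\const(t)\subseteq|\Sigma_1|$ (since $\ell$'s constants are among those of $\theta\ell$) and each $\theta(x)$ for $x$ free in $\ell$ lies in $\Lambda(\Sigma_1)$. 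The third clause of the definition of fragment then gives $\ell\lra r\in\cR_1$ and $\const(r)\subseteq|\Sigma_1|$. Hence $t\lra_{\cR_1} u$, so $t\lra_1 u$, and $u=\theta r\in\Lambda(\Sigma_1)$ since $\const(\theta r)\subseteq\const(r)\cup\bigcup_{x\in\mathrm{FV}(r)}\const(\theta x)\subseteq|\Sigma_1|$.

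**The inductive cases and conclusion.** If the contracted redex is not at the root, then $t$ is one of $\tpi{x}{A}{B}$, $\tabs{x}{A}{B}$, or $v\,w$ (the cases $t=c$, $t=x$, $t=\TYPE$, $t=\KIND$ admit no reduction), and the step $t\lra_0 u$ contracts a redex strictly inside one immediate subterm, say $A\lra_0 A'$ with $u$ obtained by replacing that occurrence. That subterm is in $\Lambda(\Sigma_1)$ because $t$ is, so by the induction hypothesis $A\lra_1 A'$ and $A'\in\Lambda(\Sigma_1)$. Since $\lra_1=\lra_{\beta\cR_1}$ is closed under term constructors, $t\lra_1 u$; and $u\in\Lambda(\Sigma_1)$ because all of its subterms — the untouched ones from $t$ and the new $A'$ — are in $\Lambda(\Sigma_1)$. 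This exhausts all cases.

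**Expected main obstacle.** This lemma is essentially bookkeeping, so there is no deep obstacle; the one point demanding care is the $\cR_0$-step base case, specifically the inference that $\ell\lra r$ actually belongs to $\cR_1$. This relies crucially on the third clause of the fragment definition being phrased as an implication about \emph{left-hand-side} constants ($\const(\ell)\subseteq|\Sigma_1|$ forces the rule into $\cR_1$), together with the syntactic fact that $\const(\ell)\subseteq\const(\theta\ell)$ — which holds because a rewrite rule's left-hand side $c\,\ell_1\cdots\ell_n$ literally contains the constant $c$ and every constant of each $\ell_i$, none of which can be erased by applying a substitution $\theta$ (substitution only replaces variables). I would make this containment explicit in the write-up, since it is the hinge that lets the fragment hypothesis discharge the obligation that the reduct stays within $\Lambda(\Sigma_1)$ and that the rewrite remains available in the sub-system.
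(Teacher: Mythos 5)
Your proof is correct and follows essentially the same route as the paper's: an induction on the position of the contracted redex, with the crucial case being a top-level $\cR_0$-step, where the containment $\const(\ell)\subseteq\const(\theta\ell)\subseteq|\Sigma_1|$ triggers the third clause of the fragment definition to yield $\ell\lra r\in\cR_1$ and $\const(r)\subseteq|\Sigma_1|$. You merely spell out a bit more than the paper does (the top-level $\beta$ case and the propagation through term constructors, which the paper dismisses as routine), so no further comparison is needed.
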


\begin{proof}
By induction on the position where the rule is applied. We only detail
the case of a top reduction, the other cases easily follow by
induction hypothesis.

So, let $\ell \lra r$ be the rule used to rewrite $t$ in $u$ and $\theta$ such
that $t = \theta \ell$ and $u = \theta r$.  As $t \in \Lambda(\Sigma_1)$,
we have $\ell \in \Lambda(\Sigma_1)$ and, for all $x$ free in $\ell$, $\theta
x \in \Lambda(\Sigma_1)$. Thus, as $(\Sigma_1,\cR_1)$ is a fragment of
$(\Sigma_0,\cR_0)$, $r\in\Lambda(\Sigma_1)$ and $\ell \lra r \in \cR_1$.
Therefore $t \lra_1 u$ and $u = \theta r \in \Lambda(\Sigma_1)$.
\end{proof}

\begin{lem}[Preservation of confluence]
\label{lem-confluence}
Every fragment of a confluent system is confluent.
\end{lem}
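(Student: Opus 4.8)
The plan is to reduce confluence of the fragment to confluence of the ambient system, transporting reduction sequences back and forth, with Lemma~\ref{lem-red} as the only real tool. First I would record the trivial inclusion $\lra_1\ \subseteq\ \lra_0$: since $\cR_1\subseteq\cR_0$ and $\beta$-reduction does not depend on the signature or on the rules, every $\lra_1$-step is a $\lra_0$-step, hence $\lra_1^*\ \subseteq\ \lra_0^*$. I would also note $\Lambda(\Sigma_1)\subseteq\Lambda(\Sigma_0)$, which is immediate from $\Sigma_1\subseteq\Sigma_0$.

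Next I would upgrade Lemma~\ref{lem-red} from single steps to reduction sequences: if $t\in\Lambda(\Sigma_1)$ and $t\lra_0^* u$, then $t\lra_1^* u$ and $u\in\Lambda(\Sigma_1)$. This is a straightforward induction on the length of the reduction $t\lra_0^* u$; the base case is immediate, and in the inductive step $t\lra_0 t'\lra_0^* u$ one applies Lemma~\ref{lem-red} to get $t\lra_1 t'$ with $t'\in\Lambda(\Sigma_1)$, then invokes the induction hypothesis on $t'$.

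Then the confluence argument is a one-step diagram chase. Let $\Sigma_1,\cR_1$ be a fragment of the confluent system $\Sigma_0,\cR_0$, and take $t,u,v\in\Lambda(\Sigma_1)$ with $t\lra_1^* u$ and $t\lra_1^* v$. By the inclusion above, $t\lra_0^* u$ and $t\lra_0^* v$, and since $t,u,v\in\Lambda(\Sigma_0)$, confluence of $\Sigma_0,\cR_0$ yields $w\in\Lambda(\Sigma_0)$ with $u\lra_0^* w$ and $v\lra_0^* w$. Finally, applying the upgraded lemma to $u\lra_0^* w$ (legitimate since $u\in\Lambda(\Sigma_1)$) gives $u\lra_1^* w$ and $w\in\Lambda(\Sigma_1)$, and likewise $v\lra_1^* w$. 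Hence $w$ is the required common reduct lying in $\Lambda(\Sigma_1)$.

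There is no real obstacle here: all the content is packaged in Lemma~\ref{lem-red}, which guarantees both that a $\lra_0$-step out of a $\Sigma_1$-term is already a $\lra_1$-step and that it stays inside $\Lambda(\Sigma_1)$. The only point requiring care is to apply that lemma along the \emph{lower} legs $u\lra_0^* w$ and $v\lra_0^* w$ of the confluence diagram, whose sources are already known to be in $\Lambda(\Sigma_1)$, rather than trying to argue about $w$ directly.
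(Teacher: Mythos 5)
Your proof is correct and follows essentially the same route as the paper: transport the two $\lra_1^*$-reductions into the ambient system, apply its confluence, and pull the closing legs $u\lra_0^*w$ and $v\lra_0^*w$ back into the fragment via Lemma~\ref{lem-red}. Your explicit remark that Lemma~\ref{lem-red} must first be iterated from single steps to $\lra_0^*$-sequences is a point the paper glosses over, but it is the same argument.
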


\begin{proof}
Let $(\Sigma_1,\cR_1)$ be a fragment of a confluent system
$(\Sigma_0,\cR_0)$.  We prove that $\lra_1$ is confluent on
$\Lambda(\Sigma_1)$.  Assume that $t, u, v \in \Lambda(\Sigma_1)$, $t
\lra_1^* u$ and $t \lra_1^* v$. Since $|\Sigma_1|\subseteq|\Sigma_0|$,
we have $t, u, v \in \Lambda(\Sigma_0)$. Since $\cR_1 \subseteq
\cR_0$, we have $t \lra_0^* u$ and $t \lra_0^* v$. By confluence of
$\lra_0$ on $\Lambda(\Sigma_0)$, there exists a $w$ in
$\Lambda(\Sigma_0)$ such that $u \lra_0^* w$ and $v \lra_0^* w$. Since
$u, v \in \Lambda(\Sigma_1)$, by Lemma \ref{lem-red}, $w \in
\Lambda(\Sigma_1)$, $u \lra_1^* w$ and $v \lra_1^* w$.
\end{proof}

\begin{defi}[Sub-theory]
A system $(\Sigma_1,\cR_1)$ is a sub-theory of a theory $(\Sigma_0,\cR_0)$,
if it is a fragment of $(\Sigma_0,\cR_0)$ and it is a theory.  
\end{defi}

As we already know that $\cR_1$ is confluent, this amounts to
say that each rule of $\cR_1$ preserves typing in $(\Sigma_1,\cR_1)$.

\subsection{The fragment theorem}

\begin{thm}\label{th-frag}
Let $(\Sigma_0,\cR_0)$ be a confluent system and $(\Sigma_1,\cR_1)$ be a
sub-theory of $(\Sigma_0,\cR_0)$.
\begin{itemize}
    \item
If the judgement $\Gamma \vdash_0 t:D$ is derivable,
$\Gamma\in\Lambda(\Sigma_1)$ and $t\in\Lambda(\Sigma_1)$, then there
exists $D'\in\Lambda(\Sigma_1)$ such that $D \lra_0^* D'$ and the
judgement $\Gamma \vdash_1 t:D'$ is derivable.

  \item
If the judgement $\vdash_0
\Gamma~\mbox{well-formed}$ is derivable and
$\Gamma\in\Lambda(\Sigma_1)$, then the judgement $\vdash_1
\Gamma~\mbox{well-formed}$ is derivable.
\end{itemize}

\end{thm}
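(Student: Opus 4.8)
The plan is to prove both statements simultaneously by a single induction on the structure of the derivation of the hypothesis judgement in $\Sigma_0,\cR_0$. This is the natural approach because the two clauses are mutually dependent: the typing-judgement clause uses well-formedness of $\Gamma$ at the leaves of a derivation (rules (sort), (const), (var)), while the well-formedness clause, when $\Gamma = \Gamma',x:A$, appeals via rule (decl) to a typing judgement $\Gamma' \vdash_0 A:s$ — and here one must first check that $A\in\Lambda(\Sigma_1)$, which follows from $\Gamma\in\Lambda(\Sigma_1)$. So I would fold both statements into one statement ``for every derivation $\pi$: if $\pi$ proves $\Gamma\vdash_0 t:D$ with $\Gamma,t\in\Lambda(\Sigma_1)$ then \ldots, and if $\pi$ proves $\vdash_0\Gamma$ well-formed with $\Gamma\in\Lambda(\Sigma_1)$ then \ldots'' and induct on $\pi$.

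Next I would go through the typing rules of Figure~\ref{typingrules} case by case. The leaf rules (sort), (const), (var) are immediate: the subject $t$ is a sort, constant or variable, the ascribed type $D$ is already in $\Lambda(\Sigma_1)$ (for (const), because $(c:A)\in\Sigma_1$ by the fragment condition $\Sigma_1\subseteq\Sigma_0$ together with $c\in|\Sigma_1|$, and the second fragment condition gives $\const(A)\subseteq|\Sigma_1|$; for (var), $x:D\in\Gamma\in\Lambda(\Sigma_1)$), and well-formedness of $\Gamma$ in $\Sigma_1$ comes from the well-formedness clause of the induction hypothesis. For (prod), (abs), (app) one peels off the outermost constructor of $t$; each immediate subterm lies in $\Lambda(\Sigma_1)$, so the induction hypothesis applies to each premise, yielding $\Sigma_1$-derivations of slightly reduced types, and one reassembles the conclusion, possibly using Lemma~\ref{typtypprod} to recover that a domain has type $\TYPE$ and using confluence plus Lemma~\ref{lem-red} to reconcile the reduced types (e.g.\ in (app), where the type $D = (u/x)B$ and one has a reduced $\Pi x{:}A'.\,B'$; injectivity of the product modulo $\equiv_{\beta\cR_1}$, available because $\lra_{\beta\cR_1}$ is confluent by Lemma~\ref{lem-confluence}, lets one match things up). The key rule is (conv): here $t:D$ is derived from $t:C$ with $C\equiv_{\beta\cR_0} D$ and $D:s$ in $\Sigma_0$. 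The induction hypothesis on the first premise gives $C'\in\Lambda(\Sigma_1)$ with $C\lra_0^* C'$ and $\Gamma\vdash_1 t:C'$. One must produce $D'\in\Lambda(\Sigma_1)$ with $D\lra_0^* D'$ and $\Gamma\vdash_1 t:D'$. By confluence of $\lra_0$ on $\Lambda(\Sigma_0)$ applied to the conversion $C\equiv_0 D$, there is a common reduct $w$ with $C'\lra_0^* w$ and $D\lra_0^* w$; since $C'\in\Lambda(\Sigma_1)$, Lemma~\ref{lem-red} gives $w\in\Lambda(\Sigma_1)$, $C'\lra_1^* w$, and hence (by subject reduction for $\lra_1$, i.e.\ the type-preservation hypothesis on the fragment together with confluence of $\cR_1$) $\Gamma\vdash_1 t:w$; set $D' = w$. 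Finally, for the well-formedness clause: the (empty) case is trivial, and the (decl) case with $\Gamma = \Gamma',x:A$ uses $A\in\Lambda(\Sigma_1)$ (from $\Gamma\in\Lambda(\Sigma_1)$) to apply the typing clause of the induction hypothesis to the premise $\Gamma'\vdash_0 A:s$, obtaining $s'$ with $s\lra_0^* s'$; but $s$ is a sort so $s' = s$, and then $\Gamma'\vdash_1 A:s$ and rule (decl) give $\vdash_1 \Gamma$ well-formed.

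The main obstacle I anticipate is the (conv) and (app) cases, where the ascribed type in the $\Sigma_0$-derivation is not literally preserved — it is only recovered up to $\lra_0^*$-reduction, which is exactly why the statement is phrased with ``there exists $D'\in\Lambda(\Sigma_1)$ with $D\lra_0^* D'$'' rather than with $D$ itself. Getting the bookkeeping right — that the reduced type stays inside $\Lambda(\Sigma_1)$ (Lemma~\ref{lem-red}), that reduction inside $\Lambda(\Sigma_1)$ preserves typing in $\Sigma_1,\cR_1$ (the type-preservation hypothesis, using confluence of $\cR_1$ from Lemma~\ref{lem-confluence} to get product injectivity and hence subject reduction for $\beta$), and that product injectivity modulo $\equiv_1$ lets the pieces be matched in the (app)/(abs) cases — is the delicate part, but each ingredient has already been supplied by Lemmas~\ref{typtypprod}, \ref{lem-red}, and~\ref{lem-confluence} and the standing hypotheses of the theorem. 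Everything else is routine structural induction.
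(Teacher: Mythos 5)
Your proposal is correct and follows essentially the same route as the paper: a simultaneous structural induction on the $\Sigma_0$-derivation, with (abs), (app), and (conv) as the delicate cases, resolved by exactly the same ingredients (Lemma~\ref{typtypprod}, Lemma~\ref{lem-red}, confluence of $\lra_0$, and the type-preservation hypothesis on the fragment). The only points you compress relative to the paper are routine: in (conv) and (abs) one must also check that the reduced type is not $\KIND$ (via the observation that $\KIND$ occurs in no typable term and no rewrite rule) before invoking Lemma~\ref{typtypprod}, and in (app) the paper reconciles the two reducts of the domain by noting that a product only rewrites to a product, rather than by product injectivity --- but these are presentational details, not gaps.
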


\begin{proof}
By mutual induction on the derivations, and by case analysis on the
last typing rule. Before detailing each case, note that the most
difficult cases are (abs), (app), and (conv), the other cases are a
simple application of the induction hypothesis.

\begin{itemize}
\item If the last rule of the derivation is
$$\irule{\Gamma\vdash_0 A:\TYPE & \Gamma,x:A \vdash_0 B:s &
         \Gamma,x:A\vdash_0 t:B}
        {\Gamma\vdash_0 \tabs{x}{A}{t} : \tpi{x}{A}{B}}
        {\mbox{(abs)}}$$
as $\Gamma$, $A$, and $t$ are in $\Lambda(\Sigma_1)$, by induction
hypothesis, there exists $A'$ in $\Lambda(\Sigma_1)$ such that $\TYPE
\lra^*_0 A'$ and $\Gamma \vdash_1 A:A'$ is derivable, and there exists
$B'$ in $\Lambda(\Sigma_1)$ such that $B \lra_0^* B'$ and $\Gamma,x:A
\vdash_1 t:B'$ is derivable.  As $\TYPE$ is a sort, $A' =
\TYPE$. Therefore, $\Gamma \vdash_1 A:\TYPE$ is derivable.

As $B$ is typable and every subterm of a typable term is typable,
$\KIND$ does not occur in $B$. As $B \lra_0^* B'$ and no rule contains
$\KIND$, $\KIND$ does not occur in $B'$ as well. Hence, $B'\neq
\KIND$. By Lemma \ref{typtypprod}, as $\Gamma,x:A \vdash_1 t:B'$ is
derivable and $B'\neq \KIND$, there exists a sort $s'$ such that
$\Gamma,x:A \vdash_1 B':s'$ is derivable.

Thus, by the rule (abs), $\Gamma\vdash_1\tabs{x}{A}{t}:\tpi{x}{A}{B'}$
is derivable.  So there is $D'= \tpi{x}{A}{B'}$ in
$\Lambda(\Sigma_1)$ such that $\tpi{x}{A}{B} \lra_0^* D'$ and
$\Gamma\vdash_1\tabs{x}{A}{t}:D'$ is derivable.

\item If the last rule of the derivation is
$$\irule{\Gamma\vdash_0t:\tpi{x}{A}{B} & \Gamma\vdash_0u:A}
        {\Gamma\vdash_0 t~u:(u/x)B}
        {\mbox{(app)}}$$
as $\Gamma$, $t$, and $u$ are in $\Lambda(\Sigma_1)$, by induction
hypothesis, there exist $C$ and $A_2$ in $\Lambda(\Sigma_1)$, such
that $\tpi{x}{A}{B} \lra_0^* C$, $\Gamma \vdash_1 t:C$ is derivable,
$A \lra_0^*A_2$, and $\Gamma \vdash_1u :A_2$ is derivable.  As
$\tpi{x}{A}{B} \lra_0^* C$ and rewriting rules are of the form
$(c~l_1\ldots l_n\lra r)$, there exist $A_1$ and $B_1$ in
$\Lambda(\Sigma_1)$ such that $C = \tpi{x}{A_1}{B_1}$, $A \lra_0^*
A_1$, and $B \lra_0^* B_1$. By confluence of $\lra_0$, there exists
$A'$ such that $A_1 \lra_0^* A'$ and $A_2 \lra_0^*A'$. By Lemma
\ref{lem-red}, as $A_1 \in\Lambda(\Sigma_1)$ and $A_1 \lra_0^* A'$, we
have $A' \in \Lambda(\Sigma_1)$ and $A_1 \lra_1^* A'$.  In a similar
way, as $A_2 \in \Lambda(\Sigma_1)$ and $A_2 \lra_0^* A'$, we have
$A_2 \lra_1^* A'$.  By Lemma \ref{typtypprod}, as $\Gamma \vdash_1
t:\tpi{x}{A_1}{B_1}$ is derivable and $\tpi{x}{A_1}{B_1} \neq \KIND$,
there exists a sort $s$ such that $\Gamma \vdash_1
\tpi{x}{A_1}{B_1}:s$ is derivable.  Thus, by Lemma \ref{typtypprod},
$\Gamma \vdash_1 A_1:\TYPE$ is derivable.

As $\Gamma \vdash_1 \tpi{x}{A_1}{B_1}:s$, $\tpi{x}{A_1}{B_1} \lra_1^*
\tpi{x}{A'}{B_1}$, and $(\Sigma_1,\cR_1)$ preserves typing, $\Gamma
\vdash_1 \tpi{x}{A'}{B_1}:s$ is derivable.  In a similar way, as
$\Gamma \vdash_1 A_1:\TYPE$ is derivable, and $A_1 \lra_1^* A'$,
$\Gamma \vdash_1 A':\TYPE$ is derivable. Therefore, by the rule
(conv), $\Gamma\vdash_1 t:\tpi{x}{A'}{B_1}$ and $\Gamma\vdash_1u:A'$
are derivable. Therefore, by the rule (app), $\Gamma \vdash_1
t~u:(u/x)B_1$ is derivable. So there exists $D' = (u/x)B_1$ in
$\Lambda(\Sigma_1)$, such that $(u/x)B \lra_0^* D'$ and
$\Gamma\vdash_1t~u:D'$ is derivable.

\item If the last rule of the derivation is
$$\irule{\Gamma \vdash_0 t:A & \Gamma \vdash_0 B:s}
        {\Gamma \vdash_0 t:B}
        {\mbox{(conv)}~~~A \equiv_{\beta\cR_0} B}$$
as $\Gamma$ and $t$ are in $\Lambda(\Sigma_1)$, by induction
hypothesis, there exists $A'$ in $\Lambda(\Sigma_1)$ such that $A
\lra_0^* A'$ and $\Gamma \vdash_1 t:A'$ is derivable.  By confluence
of $\lra_0$, there exists $C$ such that $A' \lra_0^* C$ and $B
\lra_0^* C$.  As $A' \in \Lambda(\Sigma_1)$ and $A' \lra_0^* C$ we
have, by Lemma \ref{lem-red}, $C \in \Lambda(\Sigma_1)$ and $A'
\lra_1^* C$.

As $B$ is typable and every subterm of a typable term is typable,
$\KIND$ does not occur in $B$. As $B\lra_0^*C$ and no rule contains
$\KIND$, $\KIND$ does not occur in $C$ as well. Thus $C \neq \KIND$.
As $A' \lra_0^* C$, $A' \neq \KIND$.  By Lemma
\ref{typtypprod}, as $\Gamma \vdash_1 t:A'$ and $A' \neq \KIND$, there
exists a sort $s'$ such that $\Gamma\vdash_1 A':s'$ is derivable.
Thus, as $A' \lra_1^* C$,
and $(\Sigma_1,\cR_1)$ preserves typing, $\Gamma \vdash_1 C:s'$ is derivable.
a
As $\Gamma \vdash_1 t:A'$ and $\Gamma \vdash_1 C:s'$ are derivable and
$A'\lra_1C$, by the rule (conv), $\Gamma \vdash_1 t:C$ is derivable.
Thus there exists $D' = C$ in $\Lambda(\Sigma_1)$ such that $\Gamma
\vdash_1 t:D'$ is derivable and $B \lra_0^* D'$.

  \item If the last rule of the derivation is
$$\irule{}
        {\vdash_0 [~]~\mbox{well-formed}}
        {\mbox{(empty)}}$$
by the rule
(empty), $\vdash_1
      [~]~\mbox{well-formed}$ is derivable.

\item If the last rule of the derivation is
$$\irule{\Gamma \vdash_0 A:s}
        {\vdash_0 \Gamma, x:A~\mbox{well-formed}}
        {\mbox{(decl)}}$$
as $\Gamma$ and $A$ are in $\Lambda(\Sigma_1)$, by induction
hypothesis, there exist $A'$ in $\Lambda(\Sigma_1)$ such that $s
\lra_0^* A'$ and $\Gamma \vdash_1 A:A'$ is
derivable. As $s$ is a sort, $A' = s$. Therefore,
$\Gamma \vdash_1 A:s$ is derivable and, by the rule (decl),
$\vdash_1 \Gamma, x:A~\mbox{well-formed}$ is derivable.

\item If the last rule of the derivation is
$$\irule{\vdash_0 \Gamma~\mbox{well-formed}}
        {\Gamma \vdash_0 \TYPE:\KIND}
        {\mbox{(sort)}}$$
as $\Gamma$ is in $\Lambda(\Sigma_1)$, by induction
hypothesis, $\vdash_1 \Gamma~\mbox{well-formed}$
is derivable. Thus, by the rule (sort),
$\Gamma \vdash_1 \TYPE:\KIND$ is derivable. So there exists
$D' = \KIND$ in $\Lambda(\Sigma_1)$
such that $\KIND \lra_0^* D'$ and $\Gamma \vdash_1 \TYPE:D'$.

\item If the last rule of the derivation is
$$\irule{\vdash_0 \Gamma~\mbox{well-formed}~~~\vdash_0 A:s}
        {\Gamma \vdash_0 c:A}
        {\mbox{(const)}~~~c:A \in \Sigma_0}$$
as $\Gamma$ is in $\Lambda(\Sigma_1)$, by induction hypothesis,
$\vdash_1 \Gamma~\mbox{well-formed}$ is
derivable.
And as $c$ is in $\Lambda(\Sigma_1)$, it is in
$|\Sigma_1|$, thus $c:A$ is in $\Sigma_1$ and, since $(\Sigma_1,\cR_1)$
is a fragment of $(\Sigma_0,\cR_0)$, $A \in \Lambda(\Sigma_1)$.

Thus, by induction hypothesis, there exists $A'$ such that
$\vdash_1 A:A'$ is derivable and $s \lra_0^* A'$. As $s$ is a sort, $A' = s$.
So $\vdash_1 A:s$ is derivable.
Thus, by the rule (const),
$\Gamma \vdash_1 c:A$ is derivable. So, there exists $D' = A$ in $\Lambda(\Sigma_1)$ such that $A \lra_0^* D'$ and $\Gamma \vdash_1 c:D'$ is derivable.

\item If the last rule of the derivation is
$$\irule{\vdash_0 \Gamma~\mbox{well-formed}}
        {\Gamma \vdash_0 x:A}
        {\mbox{(var)}~~~x:A \in \Gamma}$$
as $\Gamma$ is in $\Lambda(\Sigma_1)$, by induction hypothesis, $\vdash_1 \Gamma~\mbox{well-formed}$ is
derivable. Thus, by the rule (var), $\Gamma \vdash_1
x:A$ is derivable.  So there exists $D' = A$ in $\Lambda(\Sigma_1)$
such that $A \lra_0^* D'$ and $\Gamma \vdash_1 x:D'$.

\item If the last rule of the derivation is
  $$\irule{\Gamma \vdash_0 A:\TYPE & \Gamma, x:A\vdash_0 B:s}
        {\Gamma \vdash_0 \tpi{x}{A}{B}:s}
        {\mbox{(prod)}}$$
as $\Gamma$, $A$, and $B$ are in $\Lambda(\Sigma_1)$, by induction
hypothesis, there exists $A'$ in $\Lambda(\Sigma_1)$ such that $\TYPE
\lra_0^* A'$ and $\Gamma \vdash_1 A:A'$ is derivable and
there exists $B'$ in $\Lambda(\Sigma_1)$ such that $s \lra_0^*B'$ and
$\Gamma,x:A\vdash_1 B:B'$ is derivable.  As $\TYPE$
and $s$ are sorts, $A' = \TYPE$ and $B' = s$. Therefore,
$\Gamma \vdash_1 A:\TYPE$ and $\Gamma,x:A \vdash_1 B:s$ are
derivable. Thus, by the rule (prod), $\Gamma\vdash_1
\tpi{x}{A}{B}:s$ is derivable. So there exists $D' = s$ in
$\Lambda(\Sigma_1)$ such that $s \lra_0^* D'$ and
$\Gamma\vdash_1 \tpi{x}{A}{B}:D'$ is derivable. \qedhere
\end{itemize}
\end{proof}

\begin{cor}
\label{cor-subtheory}
Let $(\Sigma_0,\cR_0)$ be a confluent system, $(\Sigma_1,\cR_1)$ be a
sub-theory of $(\Sigma_0,\cR_0)$.  If
$\Gamma\vdash_0t:D$, $\Gamma\in\Lambda(\Sigma_1)$,
$t\in\Lambda(\Sigma_1)$, and $D \in \Lambda(\Sigma_1)$, then
$\Gamma\vdash_1t:D$.

In particular, if $(\Sigma_0,\cR_0)$ is a theory, $(\Sigma_1,\cR_1)$ is a
sub-theory of $(\Sigma_0,\cR_0)$, $\Gamma\vdash_0t:D$,
$\Gamma\in\Lambda(\Sigma_1)$, $t\in\Lambda(\Sigma_1)$, and $D \in
\Lambda(\Sigma_1)$, then $\Gamma\vdash_1t:D$.
\end{cor}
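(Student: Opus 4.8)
The plan is to reduce the statement to Theorem~\ref{th-frag} plus one extra conversion step. Given $\Gamma\vdash_0 t:D$ with $\Gamma,t\in\Lambda(\Sigma_1)$, the first item of Theorem~\ref{th-frag} already provides some $D'\in\Lambda(\Sigma_1)$ with $D\lra_0^* D'$ and $\Gamma\vdash_1 t:D'$ derivable. Since the extra hypothesis of the corollary is precisely $D\in\Lambda(\Sigma_1)$, Lemma~\ref{lem-red} lifts $D\lra_0^* D'$ to $D\lra_1^* D'$, hence $D\equiv_1 D'$. So the only thing missing in order to apply the rule (conv) and retype $t$ with $D$ instead of $D'$ is a derivation of $\Gamma\vdash_1 D:s$ for some sort $s$.

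To obtain it, I would argue on $D$ using Lemma~\ref{typtypprod} applied to $\Gamma\vdash_0 t:D$: either $D=\KIND$, in which case $D\lra_0^* D'$ forces $D'=\KIND$ (neither a $\beta$-step nor a rule of $\cR_0$ acts on a sort), so the derivation of $\Gamma\vdash_1 t:D'$ is already a derivation of $\Gamma\vdash_1 t:D$ and we are done; or $\Gamma\vdash_0 D:s$ is derivable for some sort $s$. In the latter case, since $\Gamma$ and $D$ lie in $\Lambda(\Sigma_1)$, a second application of Theorem~\ref{th-frag} to $\Gamma\vdash_0 D:s$ yields $s'\in\Lambda(\Sigma_1)$ with $s\lra_0^* s'$ and $\Gamma\vdash_1 D:s'$ derivable; as $s$ is a sort, $s'=s$, so $\Gamma\vdash_1 D:s$ is derivable.

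Then I would conclude by the rule (conv): from $\Gamma\vdash_1 t:D'$, $\Gamma\vdash_1 D:s$ and $D'\equiv_1 D$, we obtain $\Gamma\vdash_1 t:D$. For the ``in particular'' clause, it suffices to observe that, by definition, a sub-theory of a theory is a fragment which is itself a theory, hence a fragment that preserves typing, and that a theory is in particular a confluent system; thus the hypotheses of the general statement are met and the conclusion $\Gamma\vdash_1 t:D$ follows.

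I do not expect a serious obstacle here, since the content of the corollary is essentially already contained in Theorem~\ref{th-frag}. The one point requiring attention is that the theorem only returns a $\lra_0^*$-reduct $D'$ of the given type, not $D$ itself, which is precisely why one has to separately re-derive $\Gamma\vdash_1 D:s$ via the second invocation of Theorem~\ref{th-frag} and to dispose of the degenerate case $D=\KIND$ before appealing to (conv).
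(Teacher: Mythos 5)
Your proposal is correct and follows essentially the same route as the paper's own (very terse) proof: invoke Theorem~\ref{th-frag} to get $D'$ with $D\lra_0^* D'$ and $\Gamma\vdash_1 t:D'$, lift the reduction to $\lra_1^*$ via Lemma~\ref{lem-red}, and finish with (conv). The only difference is that you explicitly justify the premise $\Gamma\vdash_1 D:s$ of the (conv) rule (via Lemma~\ref{typtypprod}, a second appeal to Theorem~\ref{th-frag}, and the separate treatment of $D=\KIND$), a detail the paper silently elides; your handling of it is correct.
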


\begin{proof}
There is a $D'\in\Lambda(\Sigma_1)$ such that $D\lra_0^*D'$ and
$\Gamma\vdash_1t:D'$.  As $D\in \Lambda(\Sigma_1)$ and $D\lra_0^*D'$.
By Lemma \ref{lem-red} we have $D\lra_1^*D'$, and we conclude with the
rule (conv).
\end{proof}

\begin{thm}[Sub-theories of $\cU$]\label{lem-frag-U}
Every fragment $(\Sigma_1,\cR_1)$ of $\cU$ (including $\cU$ itself) is a
theory, that is, is confluent and preserves typing.
\end{thm}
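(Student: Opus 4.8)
The plan is to reduce the statement to two facts about $\cU$ alone: (a) $\lra_{\beta\cR_\cU}$ is confluent on $\Lambda(\Sigma_\cU)$, and (b) each rule of $\cR_\cU$ preserves typing not only in $\cU$ but, more robustly, in \emph{every} system $\Sigma,\cR$ that contains that rule together with the declarations of all the constants occurring in it, as soon as $\lra_{\beta\cR}$ is confluent on $\Lambda(\Sigma)$. Once (a) and (b) are proved, the theorem follows: if $\Sigma_1,\cR_1$ is a fragment of $\cU$, then $\lra_{\beta\cR_1}$ is confluent on $\Lambda(\Sigma_1)$ by Lemma~\ref{lem-confluence}; and since $\Sigma_1,\cR_1$ is a fragment, each rule $\ell\lra r\in\cR_1$ lies in $\Lambda(\Sigma_1)$, as do the $\Sigma_1$-declarations of the constants it mentions (these are precisely the closure conditions in the definition of fragment), so by (b) every rule of $\cR_1$ preserves typing in $\Sigma_1,\cR_1$. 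Thus $\Sigma_1,\cR_1$ is a theory, and taking $\Sigma_1,\cR_1=\Sigma_\cU,\cR_\cU$ gives that $\cU$ is a theory.

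For (a), I would observe that every rule of $\cR_\cU$ is left-linear and that every left-hand side is a higher-order pattern of the shape $c~\ell_1~\ldots~\ell_n$, and then check that no two left-hand sides unify: the rules with head $\Prf$, $\El$, $\Ty$, $\Els$ are distinguished by the head constant of their argument (for $\Prf$ among $\bltop,\blbot,\blneg,\imp,\conj,\disj,\fa,\ex,\impd,\sffa$; for $\El$ among $\bliota,\o,\arr,\arrd,\blpi$; for $\Ty$ among $\set,\dtarr$; for $\Els$ among $\lift,\calfa$), the rules unfolding a bare classical connective concern the pairwise distinct constants $\Prfc,\impc,\conjc,\disjc,\fac,\exc$, and the two rules for $\pred$ (resp.\ $\positive$) are distinguished by $\0$ versus $\S$; moreover no left-hand side overlaps a $\beta$-redex. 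Hence $\cR_\cU$ is orthogonal, and $\lra_{\beta\cR_\cU}$ is confluent on $\Lambda(\Sigma_\cU)$ by the standard confluence theorem for the union of $\beta$-reduction with an orthogonal higher-order rewrite system (see e.g.\ \cite{terese03book}). This is the main obstacle: several rules, such as ($\fa$-red), ($\ex$-red), ($\arrd$-red), ($\impd$-red), ($\blpi$-red), ($\dtarr$-red), ($\calfa$-red), ($\sffa$-red), have a pattern variable applied to a bound variable on their right-hand side, so one must verify that their left-hand sides are genuine higher-order patterns and invoke a higher-order confluence result rather than a first-order one.

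For (b), confluence yields injectivity of the product modulo $\equiv_{\beta\cR}$, as recalled in Section~\ref{sec-lpmc}; together with the generation lemma of \lpr{} (which needs no confluence) this lets one prove (b) by a case analysis over the rules. In each case, from a derivation of $\Gamma\vdash_{\Sigma,\cR}\theta\ell:A$ one recovers, by repeated use of generation and product injectivity, the ``intended'' $\theta$-types of the variables of $\ell$ (e.g.\ $\theta x,\theta y:\Prop$ for ($\imp$-red); $\theta a:\Set$ and $\theta p:\El~(\theta a)\ra\Prop$ for ($\fa$-red); $\theta x:\I$ for ($\pred$-red2); and so on) and gets $A\equiv_{\beta\cR}T$, where $T$ is the expected type of $\theta\ell$ --- one of $\TYPE$, $\Prop$, $\I$ for the unfolding rules (and $\El~(\theta t)$ for ($\fst$-red) if that rule is present, for which one also needs $\El~(\psub~t~p)$ to be injective in $t$ and $p$ modulo conversion, which holds because no rule matches $\El~(\psub~\cdots)$ at the head). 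One then typechecks $\theta r$ from exactly the same data, obtains $\Gamma\vdash_{\Sigma,\cR}\theta r:T$, and concludes $\Gamma\vdash_{\Sigma,\cR}\theta r:A$ by the rule (conv), using Lemma~\ref{typtypprod} to supply $\Gamma\vdash_{\Sigma,\cR}A:s$ (and $A\neq\KIND$, which is clear since $A\equiv_{\beta\cR}T$). The crucial point is that this verification uses only the constants occurring in the rule, their declared types, the generation lemma, and product injectivity for $\equiv_{\beta\cR}$; it therefore goes through verbatim in any system satisfying the hypotheses of (b), which is exactly what makes the argument survive the passage to fragments.
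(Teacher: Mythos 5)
Your proposal is correct and follows essentially the same route as the paper: confluence of $\lra_{\beta\cR_\cU}$ via orthogonality of the rewrite system, then reduction of the theorem to showing that each rule preserves typing in any confluent fragment containing its symbols, by extracting the typability constraints of the left-hand side and re-typing the right-hand side modulo them, with ($\pred$-red2) and ($\fst$-red) (using that no rule matches $\El~(\psub~t~p)$ at the head) singled out as the only nontrivial cases. The only difference is presentational: you unfold the generation-lemma argument by hand where the paper invokes a published criterion for type preservation.
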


\begin{proof}
The relation $\lra_{\beta\cR_\cU}$ is confluent on
$\Lambda(\Sigma_\cU)$ since it is an orthogonal combinatory reduction
system \cite{klop93tcs}. Hence, after the fragment theorem, it is
sufficient to prove that every rule of $\cR_\cU$ preserves typing in
any fragment $(\Sigma_1,\cR_1)$ containing the symbols of the rule.

To this end, we will use the criterion described in \cite[Theorem
  19]{blanqui20fscd} which consists in computing the equations that
must be satisfied for a rule left-hand side to be typable, which are
system-independent, and then check that the right-hand side has the
same type modulo these equations in the desired system: for all rules
$l\lra r\in\Lambda(\Sigma_1)$, sets of equations $\cE$ and terms $T$,
if the inferred type of $l$ is $T$, the typability constraints of $l$
are $\cE$, and $r$ has type type $T$ in the system $\Lambda(\Sigma_1)$
whose conversion relation $\equiv_{\beta\cR\cE}$ has been enriched
with $\cE$, then $l\lra r$ preserves typing in $\Lambda(\Sigma_1)$.

This criterion can easily be checked for all the rules but
($\pred$-red2) and ($\fst$-red) because, except in those two cases,
the left-hand side and the right-hand side have the same type.

In ($\pred$-red2), $\pred~(\S~x) \lra x$, the left-hand side has type
$\I$ if the equation $type(x)=\I$ is
satisfied. Modulo this equation, the right-hand side has type
$\I$ in any fragment containing the symbols of the rule.

In ($\fst$-red), $\fst~t~p~(\paird~t'~p'~m) \lra m$, the left-hand
side has type $\El~t$ if $type(t) = Set$, $type(p) = \El~t \ra Prop$,
$\El~(\psub~t'~p') = \El~(\psub~t~p)$, $type(t') = \Set$, $type(p') =
\El~t' \ra Prop$, and $type(m) = \El~t'$. But, in $\cU$, there is no
rule of the form $\El~(\psub~t~p) \lra r$. Hence, by confluence, the
equation $\El~(\psub~t'~p') = \El~(\psub~t~p)$ is equivalent to the
equations $t' = t$ and $p' = p$. Therefore, the right-hand side is of
type $\El~t$ in every fragment of $\cU$ containing the symbols of the
rule.
\end{proof}

\section{Examples of sub-theories of the theory $\cU$}\label{sec-subth}
\label{sec:examples_sub-theories}

We finally identify 15 sub-theories of the theory $\cU$, that
correspond to known theories. For each of these sub-theories
$(\Sigma_\cS,\cR_\cS)$, according to the Corollary
\ref{cor-subtheory}, if $\Gamma$, $t$, and $A$ are in
$\Lambda(\Sigma_\cS)$, and $\Gamma \vdash_{\Sigma_\cU,\cR_\cU} t
: A$, then $\Gamma \vdash_{\cR_\cS,\Sigma_\cS} t : A$.

\begin{figure}[ht]
  \centering
  \def\svgwidth{0.8\textwidth}
  \import{figures/}{windrose.pdf_tex}
  \caption{{\textbf{The wind rose.}}
    In black: Minimal, Constructive, and Ecumenical predicate logic.
    In orange: Minimal, Constructive, and Ecumenical simple type theory.
    In green: Simple type theory with prenex polymorphism.
    In blue: Simple type theory with predicate subtyping.
    In cyan: Simple type theory with predicate subtyping
    and prenex polymorphism.
    In pink: the Calculus of constructions with
    a constant $\bliota$, without and with prenex polymorphism.}
\end{figure}

\subsection{Minimal predicate logic}
\label{sec:predicate-logic:min}

The 8 axioms ($\I$), ($\Set$), ($\El$), ($\bliota$), ($\Prop$), ($\Prf$),
($\imp$), and ($\fa$) define Minimal predicate logic
$(\Sigma_\cM,\cR_\cM)$.
$$\begin{array}{rcl}
\I &:& \TYPE\\
\Set &:& \TYPE\\
\El &:& \Set \ra \TYPE\\
\bliota &:& \Set\\
\El~\bliota &\lra& \I\\
\Prop &:& \TYPE\\
\Prf &:& \Prop \ra \TYPE\\
\imp &:& \Prop \ra \Prop \ra \Prop\\
\Prf~(x \imp y) &\lra& \Prf~x \ra \Prf~y\\
\fa&:&\tpi{x}{\Set}{(\El~x \ra \Prop) \ra \Prop}\\
\Prf~(\fa~x~p) &\lra& \tpi{z}{\El~x}{\Prf~(p~z)}\\
\end{array}$$
We could save the declaration ($\I$-decl) and the rule ($\bliota$-red)
by using $\El~\bliota$ instead of $\I$.

This theory can be proven equivalent to more common formulations of
Minimal predicate logic.  To do so, consider a language $\cL$ in
predicate logic.  We define a corresponding \lpr{} context
$\Gamma_\cL$ containing for each constant $f$ of $\cL$ a constant $f$
of type $\I \ra \dots \ra \I \ra \I$ and for each predicate symbol $P$
of $\cL$ a constant $P$ of type $\I \ra \dots \ra \I \ra \Prop$. A
term (resp. a proposition) of minimal predicate logic $t$ of $\cL$
translates in the natural way to a \lpr{} term of type $\I$
(resp. $\Prop$) in the theory $(\Sigma_\cM,\cR_\cM)$ and in the context
$\Gamma_\cL, \Delta$, where $\Delta$ contains, for each variable $x$
free in $t$, a variable $x$ of type $\I$.  We use the same notation
for the term (resp. the proposition) and its translation.

\begin{thm}
\label{th:predicate-logic:min:soundness_cons}
Let $\cL$ be a language and $A_1, ..., A_n \vdash B$ be a sequent of
minimal predicate logic in $\cL$.  Let $\Gamma_\cL$ containing for
each constant $f$ of $\cL$ a constant $f$ of type $\I \ra \dots \ra \I
\ra \I$ and for each predicate symbol $P$ of $\cL$ a constant $P$ of
type $\I \ra \dots \ra \I \ra \Prop$.  Let $\Delta$ be a context
containing for each variable $x$ free in $A_1, ..., A_n \vdash B$, a
variable $x$ of type $\I$.  Let $\Delta'$ be a context containing, for
each hypothesis $A_i$, a variable $a_i$ of type $\Prf~A_i$.

Then, the sequent
$A_1, ..., A_n \vdash B$ has a proof in minimal logic, if and only
if there exists a \lpr{} term $\pi$ such that $\Gamma_\cL, \Delta,
\Delta' \vdash_{\Sigma_\cM,\cR_\cM} \pi : \Prf~B$.
\end{thm}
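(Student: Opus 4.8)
The plan is to prove the two directions of the equivalence separately, with the forward direction ("if the sequent has a minimal-logic proof, then there is a $\lambda\Pi/\!\equiv$ proof-term") proved by induction on the natural-deduction derivation, and the backward direction ("if there is a $\lambda\Pi/\!\equiv$ proof-term of type $\Prf~B$ in the given context, then the sequent is provable in minimal logic") proved by extracting a natural-deduction proof from the normal form of $\pi$.

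\medskip

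\noindent\textbf{Forward direction.} I would fix a natural-deduction calculus for minimal predicate logic with implication and universal quantification (introduction/elimination rules and the axiom rule), and proceed by induction on the derivation of $A_1,\dots,A_n\vdash B$. For the axiom rule $A_i\in\{A_1,\dots,A_n\}$, the variable $a_i:\Prf~A_i$ from $\Delta'$ is the required proof-term. For $\imp$-introduction, given by induction a term $\pi$ with $\Gamma_\cL,\Delta,\Delta',a:\Prf~A\vdash_{\Sigma_M,\cR_M}\pi:\Prf~B$, the term $\tabs{a}{\Prf~A}{\pi}$ has type $\Prf~A\ra\Prf~B$, which is convertible to $\Prf~(A\imp B)$ by the rule $(\imp\text{-red})$, so by the rule (conv) it has type $\Prf~(A\imp B)$. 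For $\imp$-elimination, from terms of types $\Prf~(A\imp B)$ and $\Prf~A$, a (conv) step turns the first into a term of type $\Prf~A\ra\Prf~B$, and application gives $\Prf~B$. For $\fa$-introduction on a variable $x:\I$ not free in the hypotheses, from $\pi:\Prf~A$ (with $x$ possibly free in $A$) form $\tabs{x}{\I}{\pi}$ of type $\tpi{x}{\I}{\Prf~A}$, which is convertible to $\Prf~(\fa~\bliota~(\tabs{x}{\I}{A}))$ using $(\bliota\text{-red})$ and $(\fa\text{-red})$; the eigenvariable condition guarantees $x$ does not occur free in the types in $\Delta'$, so the typing is legitimate. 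For $\fa$-elimination, a (conv) step plus application to the instantiating object-term $t:\I$ yields $\Prf~((t/x)A)$, matching the substitution on the object-term side. Throughout one checks, using Lemma~\ref{typtypprod} and the fact that $\Sigma_M,\cR_M$ is a theory (an instance of Theorem~\ref{lem-frag-U}), that all intermediate types are well-formed so that the (conv) steps are licit.

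\medskip

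\noindent\textbf{Backward direction.} Suppose $\Gamma_\cL,\Delta,\Delta'\vdash_{\Sigma_M,\cR_M}\pi:\Prf~B$. Since $\Sigma_M,\cR_M$ is a theory, $\lra_{\beta\cR_M}$ is confluent and type-preserving; assuming normalisation of well-typed terms in this fragment one may take $\pi$ in $\beta\cR_M$-normal form without loss of generality. I would then prove, by induction on the structure of a normal term $\pi$, that whenever $\Gamma_\cL,\Delta,\Delta'\vdash_{\Sigma_M,\cR_M}\pi:\Prf~C$ for a proposition $C$ of $\cL$, the sequent $A_1,\dots,A_n\vdash C$ is derivable in minimal logic; applying this with $C=B$ gives the result. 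The key analysis is that a normal term of a type of the form $\Prf~C$ (equivalently, after the rules, a $\Pi$-type or an atomic $\Prf$-type) is either an abstraction $\tabs{z}{T}{\pi'}$ — which, reading $T$ back as $\Prf~A$ or as $\El~x=\I$, corresponds to an $\imp$- or $\fa$-introduction — or an application spine $h~u_1~\cdots~u_k$ headed by a variable $h$; the head $h$ must be one of the $a_i$ (a hypothesis variable, giving an axiom followed by a sequence of $\imp$- and $\fa$-eliminations) since the constants of $\Sigma_M$ and the variables in $\Gamma_\cL,\Delta$ cannot head a term of type $\Prf~C$. This is the standard "adequacy of encoding" argument and is where most of the work lies.

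\medskip

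\noindent\textbf{Main obstacle.} The delicate point is the backward direction: making the read-back precise requires (i) a normalisation result for the fragment $\Sigma_M,\cR_M$ — not stated in the excerpt, so I would either invoke it as a known property of this encoding or restrict attention to the shape of $\beta$-normal, $\cR_M$-normal typed terms directly; (ii) a careful case analysis of the possible head symbols and of the types appearing in the application spine, checking that each argument is itself either an object-term of type $\I$ or a proof-term of a subformula, which forces an induction on a suitable measure (e.g. the type, or the term together with its type); and (iii) handling the (conv)-rule freely used in derivations, which is absorbed by working modulo $\equiv_{\beta\cR_M}$ and using injectivity of products modulo $\equiv_{\beta\cR_M}$ (valid since $\lra_{\beta\cR_M}$ is confluent). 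The forward direction, by contrast, is a routine induction once the natural-deduction rules are fixed and the three reduction rules $(\bliota\text{-red})$, $(\imp\text{-red})$, $(\fa\text{-red})$ are used to bridge between the framework-level $\Pi$/$\lambda$ and the object-level $\imp$/$\fa$.
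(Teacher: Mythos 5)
Your proposal is correct and follows essentially the same route as the paper: a routine induction on the natural-deduction derivation for the forward direction, and for the converse an induction on an irreducible proof-term, splitting into the abstraction case (read back as $\imp$- or $\fa$-introduction) and the variable-headed application spine case (head forced to be a $\Delta'$ variable, read back as an axiom followed by eliminations, with an inner induction along the spine). The only difference is one of explicitness: the paper simply ``considers an irreducible term $\pi$'' without justifying its existence, whereas you rightly flag that this step rests on a normalisation property of the fragment.
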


\begin{proof}
The left-to-right implication is a trivial induction on the structure of the proof.

For the converse, it is enough to consider an irreducible term $\pi$ of type
$\Prf~B$ since one can prove that $\lra_{\beta\cR_\cM}$ terminates, by applying \cite{blanqui19fscd} for instance. We then prove, by induction on $\pi$, that the sequent $A_1,
..., A_n \vdash B$ has a proof in minimal logic.  As $\pi$ has the
type $\Prf~B$, it is neither a sort, nor a product, thus it is either
an abstraction or a term of the form $z~\rho_1~...~\rho_p$.

\begin{itemize}
\item If $\pi$ is an abstraction then $\Prf~B$ is equivalent to a
  product. Hence, $B$ either has the form $C \imp D$ or
  $\fa~\iota~\tabs{x}{\I}{D}$. In the first case $\pi =
  \tabs{x}{\Prf~C}{\pi'}$ and $\pi'$ is a term of type $\Prf~D$ in
  $\Gamma_\cL, \Delta, \Delta', x:\Prf~C$.  By induction hypothesis,
  the sequent $A_1, ..., A_n, C \vdash D$ has a proof and so does the
  sequent $A_1, ..., A_n \vdash C \imp D$.  In the second $\pi =
  \tabs{x}{\I}{\pi'}$ and $\pi'$ is a term of type $\Prf~D$
  in $\Gamma_\cL, \Delta, x:\I, \Delta'$. By induction
  hypothesis, the sequent $A_1, ..., A_n \vdash D$ has a proof and so
  does the sequent $A_1, ..., A_n \vdash \fa~\iota~
  \tabs{x}{\I}{D}$.

\item If $\pi$ has the form $z~\rho_1~...~\rho_p$, then as it has
  the type $\Prf~B$, $z$ can neither be a constant
  of $\Sigma_M$, nor a variable of $\Gamma_\cL$, $\Delta$. Hence,
  it is a variable
  of $\Delta'$.  Thus, it has the type $\Prf~A_i$
  for some $i$.  We prove, by induction on $j$ that the term
  $z~\rho_1~...~\rho_j$ has the type $\Prf~C$ for some
  proposition $C$, such that the sequent $A_1, ..., A_n \vdash C$ has
  a proof.  For $j = 0$, the sequent $A_1, ..., A_n \vdash A_i$ has a
  proof.  Assume the property holds for $j$. Then, as the term
 $z~\rho_1~...~\rho_j~\rho_{j+1}$ is well typed, the type
  $\Prf~C$ is a product type and $C$ either has the form $D \imp E$
  or $\fa~\iota~\tabs{x}{\I}{E}$. In the first case
  $\rho_{j+1}$ is a term of type $\Prf~D$, by induction
  hypothesis, the sequent $A_1, ..., A_n \vdash D$ has a proof, hence
  the term $z~\rho_1~...~\rho_j~\rho_{j+1}$ has the type
  $\Prf~E$ and the sequent $A_1, ..., A_n \vdash E$ has a
  proof.  In the second case, $\rho_{j+1}$ is an irreducible term of type
  $\I$, thus it is an object-term, the term
  $z~\rho_1~...~\rho_j~\rho_{j+1}$ has the type
  $\Prf~(\rho_{j+1}/x)E$, and the sequent $A_1, ..., A_n
  \vdash (\rho_{j+1}/x)E$ has a proof. \qedhere
\end{itemize}
\end{proof}

As Minimal predicate logic is itself a
logical framework, it must be complemented with more axioms,
such as the axioms of geometry, arithmetic, etc.

\subsection{Constructive predicate logic}
\label{sec:predicate-logic:constructive}

The 14 axioms ($\I$),($\Set$), ($\El$), ($\bliota$), ($\Prop$), ($\Prf$),
($\imp$), ($\fa$), ($\bltop$), ($\blbot$), ($\blneg$), ($\conj$),
($\disj$), and ($\ex$) define Constructive predicate logic.  This
theory can be proven equivalent to more common formulations of
Constructive predicate logic \cite{Dorra,expressing}.

\subsection{Ecumenical predicate logic}
\label{sec:predicate-logic:ecumenical}

The 20 axioms ($\I$), ($\Set$), ($\El$), ($\bliota$), ($\Prop$),
($\Prf$), ($\imp$), ($\fa$), ($\bltop$), ($\blbot$), ($\blneg$),
($\conj$), ($\disj$), ($\ex$), ($\Prfc$), ($\impc$), ($\conjc$),
($\disjc$), ($\fac$), and ($\exc$) define Ecumenical predicate logic.
This theory can be proven equivalent to more common formulations of
Ecumenical predicate logic \cite{Grienenberger21}.

Note that classical predicate logic is not a sub-theory of the theory
$\cU$, because the classical connectives and quantifiers depend on the
constructive ones. Yet, it is known that if a proposition contains
only classical connectives and quantifiers, it is provable in
Ecumenical predicate logic if and only if it is provable in classical
predicate logic.

\subsection{Minimal simple type theory}
\label{sec:stt:min}

Adding the two axioms ($\o$) and ($\arr$) to Predicate logic defines
Simple type theory. Indeed, Simple type theory is the theory of
propositional contents and functions. A simple type $T$ is naturally
translated to \lpr{} as a term $T$ of type $\Set$, using types
$\bliota$ and $\o$ and the arrow construction $\arr$. The higher order
terms are shallowly translated: $\lambda$-abstractions and
applications are translated using respectively \lpr{}'s
$\lambda$-abstractions and applications.

The 10 axioms ($\I$), ($\Set$), ($\bliota$), ($\El$), ($\Prop$),
$(\Prf)$, ($\imp$), ($\fa$), ($\o$), and ($\arr$) define Minimal
simple type theory. And this theory can be proven equivalent to more
common formulations of Minimal simple type theory
\cite{assaf15phd,expressing}.  Just like we can save the declaration
($\I$-decl) and the rule ($\bliota$-red) by replacing everywhere $\I$
with $\El~\bliota$, we could save the declaration ($\Prop$-decl) and
the rule ($\o$-red) by replacing everywhere $\Prop$ with $\El~\o$.
This presentation is the usual presentation of Simple type theory in
\lpr{} \cite{expressing} with 8 declarations and 3 reduction
rules. However, doing so, the obtained presentation of
Simple type theory is not an extension of Minimal predicate logic anymore.

\subsection{Constructive simple type theory}

The 16 axioms ($\I$), ($\Set$), ($\El$), ($\bliota$), ($\Prop$),
($\Prf$), ($\imp$), ($\fa$), ($\bltop$), ($\blbot$), ($\blneg$),
($\conj$), ($\disj$), ($\ex$), ($\o$) and ($\arr$) define Constructive
simple type theory.

\subsection{Ecumenical simple type theory}

The 22 axioms ($\I$), ($\Set$), ($\El$), ($\bliota$), ($\Prop$), ($\Prf$),
($\imp$), ($\fa$), ($\bltop$), ($\blbot$), ($\blneg$), ($\conj$),
($\disj$), ($\ex$), ($\Prfc$), ($\impc$), ($\conjc$), ($\disjc$),
($\fac$), ($\exc$), ($\o$) and ($\arr$) define Ecumenical simple type
theory. And this theory can be proven equivalent to more common
formulations of Ecumenical simple type theory \cite{Grienenberger21}.

\subsection{Simple type theory with predicate subtyping}

Adding to the 10 axioms of Minimal simple type theory
the 5 axioms of
predicate subtyping yields Minimal simple type theory with predicate
subtyping, formed with the 15 axioms $(\I)$, ($\Set$), ($\bliota$), ($\El$),
($\Prop$), $(\Prf)$, ($\imp$), ($\fa$), ($\o$), ($\arr$), ($\psub$),
($\pair$), ($\paird$), ($\fst$), and ($\snd$).  This theory has been
studied by F.~Gilbert \cite{Gilbert} as a verifiable language for a
minimal and idealised version of the type system of the PVS proof
assistant. It can be proven equivalent to more common formulations of
Minimal simple type theory with predicate subtyping
\cite{Gilbert,blanqui20types}.  Such formulations like PVS
\cite{pvs:semantics} often use predicate subtyping implicitly to
provide a lighter syntax without ($\pair$), ($\paird$), ($\fst$)
nor ($\snd$) but at the expense of losing uniqueness of type and
making type-checking undecidable. In these cases, terms generally do
not hold the proofs needed to be of a sub-type, which provides proof
irrelevance.  Our implementation of proof irrelevance
of Section~\ref{sec:proof-irrelevance}
extends the conversion in order to ignore these proofs.

\subsection{Simple type theory with prenex predicative polymorphism}

Adding to the 10 axioms of Minimal simple type theory
the 5 axioms of prenex
predicative polymorphism yields Simple type theory with prenex
predicative polymorphism (STT$\forall$)
\cite{Thire-lfmtp,thire20phd} formed with the 15 axioms ($\I$), ($\Set$),
($\El$), ($\bliota$), ($\Prop$), $(\Prf)$, ($\imp$), ($\fa$), ($\o$),
($\arr$), ($\Scheme$), ($\Els$), ($\lift$), ($\calfa$), and ($\sffa$).

\subsection{Simple type theory with predicate subtyping and prenex polymorphism}

Adding to the 10 axioms of Minimal simple type theory
both the 5 axioms of
predicate subtyping and the 5 axioms of prenex polymorphism yields a
sub-theory with 20 axioms which is a subsystem of PVS~\cite{pvs:semantics}
handling both predicate subtyping and prenex polymorphism.

\subsection{The Calculus of constructions}

Pure type systems \cite{Berardi,Terlouw,barendregt92chapter} are a
family of typed $\lambda$-calculi. An example is the $\lambda
\Pi$-calculus, the $\lambda$-calculus with dependent types, which is
at the basis of \lpr{} itself. As we have seen, in \lpr{}, we
have two constants, $\TYPE$ and $\KIND$, $\TYPE$ has type $\KIND$, and
we can build a product type $\tpi{x}{A}{B}$ when both $A$ and
$B$ have type $\TYPE$, in which case the product type $\tpi{x}{A}{B}$
itself has type $\TYPE$ or when $A$ has type $\TYPE$ and $B$ has type
$\KIND$, in which case the result has type $\KIND$.

A Pure type system, in general, is defined with a set of symbols, such
as $\TYPE$ and $\KIND$, called ``sorts'', a set of axioms of
the form $\langle s_1, s_2 \rangle$, expressing that the sort $s_1$
has type $s_2$, for example $\langle \TYPE, \KIND \rangle$, and a set
of rules of the form $\langle s_1, s_2, s_3 \rangle$, expressing
that we can build the product type $\tpi{x}{A}{B}$, when $A$ has type $s_1$
and $B$ has type $s_2$, and that the product type itself has type $s_3$,
for example $\langle \TYPE, \TYPE, \TYPE \rangle$ and $\langle \TYPE,
\KIND, \KIND \rangle$. When the set of axioms is
functional, each sort has at most one type and when the set of rules
is functional, each product type has at most one type.  In this case the
Pure type system is said to be ``functional''.

To have more compact notation, we often write $*$ for the sort $\TYPE$
and $\Box$ for the sort $\KIND$. So the $\lambda \Pi$-calculus is
defined with the sorts $*$ and $\Box$, the axiom $\langle *, \Box
\rangle$, and the rules $\langle *, *, * \rangle$ and $\langle *,
\Box, \Box \rangle$.  Adding the rules $\langle \Box, *, * \rangle$
and $\langle \Box, \Box, \Box \rangle$ yields the Calculus of
constructions \cite{CoquandHuet88}.

All functional Pure type systems can be expressed in \lpr{}
\cite{cousineau07tlca}: for each sorts $s$, we introduce two constants $U_s$ of
type $\TYPE$ and $\varepsilon_s$ of type $U_s \ra \TYPE$, for each
axiom $\langle s_1, s_2 \rangle$, a constant $\dot{s}_1$ of type
$U_{s_2}$, and a reduction rule
$$\varepsilon_{s_2}~\dot{s}_1 \lra U_{s_1}$$
and
for each rule $\langle s_1, s_2, s_3 \rangle$, a constant
$\dot{\Pi}_{\langle s_1,s_2,s_3 \rangle}$ of type $\tpi{x}{U_{s_1}}
{(\varepsilon_{s_1}~x \ra U_{s_2}) \ra U_{s_3}}$
and a reduction rule
$$\varepsilon_{s_3}~(\dot{\Pi}_{\langle s_1,s_2,s_3 \rangle}~x~y) \lra
\tpi{z}{\varepsilon_{s_1}~x}{\varepsilon_{s_2}~(y~z)}$$ We obtain this
way a correct and conservative expression of the Pure type system
\cite{cousineau07tlca,expressing}.  For instance, the expression of
the Calculus of constructions yields 9 declarations and 5 rules.
Writing $\Prop$ for $U_{*}$, $\Prf$ for $\varepsilon_{*}$, $\Set$ for
$U_{\Box}$, $\El$ for $\varepsilon_{\Box}$, $\o$ for $\dot{*}$,
$\impd$ for $\dot{\Pi}_{\langle *,*,* \rangle}$, $\fa$ for
$\dot{\Pi}_{\langle \Box,*,* \rangle}$, $\blpi$ for
$\dot{\Pi}_{\langle *,\Box,\Box \rangle}$, and $\arrd$ for
$\dot{\Pi}_{\langle \Box,\Box,\Box \rangle}$ we get exactly the 9
axioms ($\Prop$), $(\Prf)$, ($\Set$), ($\El$), ($\o$), ($\impd$),
($\blpi$), ($\fa$), and ($\arrd$) and this theory is equivalent to
more common formulations of the Calculus of constructions.

As $\impd$ is $\dot{\Pi}_{\langle *,*,* \rangle}$, $\fa$ is
$\dot{\Pi}_{\langle \Box,*,* \rangle}$, $\blpi$ is $\dot{\Pi}_{\langle
  *,\Box,\Box \rangle}$, and $\arrd$ is $\dot{\Pi}_{\langle
  \Box,\Box,\Box \rangle}$, using the terminology of Barendregt's
$\lambda$-cube \cite{barendregt92chapter}, the axiom ($\blpi$) expresses
dependent types, the axiom ($\fa$) polymorphism, and the axiom
($\arrd$) type constructors.

Note that these constants have similar types
$$\begin{array}{rcl}
\impd&:& \tpi{x}{\Prop}{(\Prf~x \ra \Prop) \ra \Prop}\\
\blpi&:&\tpi{x}{\Prop}{(\Prf~x \ra \Set) \ra \Set}\\
\fa&:&\tpi{x}{\Set}{(\El~x \ra \Prop) \ra \Prop}\\
\arrd&:&\tpi{x}{\Set}{(\El~x \ra \Set) \ra \Set}\\
\end{array}$$

So if $\Gamma$ is a context and $A$ is a term in the Calculus of
constructions then $A$ is inhabited in $\Gamma$ in the Calculus of
constructions if and only if the translation $\|A\|$ of $A$ in \lpr{}
is inhabited in the translation $\|\Gamma\|$ of $\Gamma$ in \lpr{}
\cite{cousineau07tlca,expressing}.  So, the formulation of the
Calculus of constructions in \lpr{} is a conservative extension of the
original formulation of the Calculus of constructions.  In the context
$\|\Gamma\|$, variables have a \lpr{} type of the form $\Prf~u$ or
$\El~u$, and none of them can have the type $\Set$. However, in
\lpr{}, nothing prevents from declaring a variable of type
$\Set$. Hence, in \lpr{}, the judgement $x:\Set \vdash x:\Set$ can be
derived, but it is not in the image of the encoding.

\subsection{The Calculus of constructions with variables of type $\Box$}

To allow the declaration of variables of type $\Box$ in the Calculus
of constructions, a possibility is to add a sort $\triangle$ and an
axiom $\Box:\triangle$ \cite{Geuvers95}, making the sort $\triangle$ a
singleton sort that contains only one closed irreducible term: $\Box$.

Expressing this Pure type system in \lpr{} introduces
two declarations for the sort $\triangle$
and one declaration and one rule for the axiom $\Box:\triangle$
\begin{itemize}
\item $U_{\triangle}:\TYPE$
\item $\varepsilon_{\triangle} : U_{\triangle} \ra \TYPE$
\item $\dot{\Box} : U_{\triangle}$
\item $\varepsilon_{\triangle}~\dot{\Box} \lra \Set$
\end{itemize}
and the variables in a context $\|\Gamma\|$ now have the type
$\Prf~u$, $\El~u$, or $\varepsilon_{\triangle}~u$.
Just like $U_{*}$ is written $\Prop$ in $\cU$, $U_{\triangle}$
is written $\Sett$, $\varepsilon_{\triangle}$ is written $\Ty$,
and $\dot{\Box}$ is written $\set$.

But this theory can be simplified. Indeed, just like $\Box$ is the
only closed irreducible term of type $\triangle$, $\set$ is the
only closed irreducible term of type $\Sett$ and thus for any
closed term $t$ of type $\Sett$, the term
$\Ty~ t$ reduces to $\Set$.  So, we can replace
everywhere the terms of the form $\Ty~ t$ with
$\Set$ and drop the symbols $\Ty$ and $\set$
and the rule $\Ty~ \set \lra \Set$. Then, we
can drop the symbol $\Sett$ as well.

So the only difference with the Calculus of constructions without
$\triangle$ is that translations of contexts now contain variables of
type $\Set$, that translate the variables of type $\Box$.

\subsection{The Calculus of constructions with a constant $\bliota:\Box$}

Adding the axiom ($\bliota$) to the Calculus of constructions yields a
sub-theory with the 10 axioms ($\Set$), ($\El$), ($\bliota$),
($\Prop$), $(\Prf)$, ($\impd$), ($\fa$), ($\o$), ($\arrd$), and
($\blpi$). It corresponds to the Calculus of constructions with an
extra constant $\iota$ of type $\Box$. Adding a constant of type
$\Set$ in \lpr{}, like adding variables of type $\Set$ does not
require to introduce an extra sort $\triangle$.

Some developments in the Calculus of constructions choose to declare
the types of mathematical objects such as $\iota$, \textit{nat},
etc. in $*$, that would correspond to $\bliota:\Prop$, fully
identifying types and propositions.
The drawback of this choice is that it gives the type $*$ to
  the type $\iota$ of the constant $0$, and the type $\Box$ to the
  type $\iota \ra *$ of the constant \textit{positive}, while, in
  Simple type theory, both $\iota$ and $\iota \ra o$ are simple types.
  This is the reason why, in the theory $\cU$, we give the type $\Set$
  and not the type $\Prop$ to the constant $\bliota$.
So, the expression of
  the simple type $\iota \ra o$ uses the constant $\arrd$, that is,
  type constructors, as both $\bliota$ and $\o$ have type $\Set$, and
  not the constant $\blpi$, dependent types, that would be used if
  $\bliota$ had the type $\Prop$ and $\o$ the type $\Set$. Dependent
types, the constant $\blpi$, are thus marginalized to type functions
mapping proofs to terms.

\subsection{The Minimal sub-theory}

Adding the axioms ($\imp$) and ($\arr$) yields a sub-theory with the
12 axioms ($\Set$), ($\El$), ($\bliota$), ($\Prop$), $(\Prf)$,
($\imp$), ($\fa$), ($\o$), ($\arr$), ($\arrd$), ($\impd$), and
($\blpi$) called the ``Minimal sub-theory'' of the theory $\cU$.  It
contains both the 10 axioms of the Calculus of constructions and the 9
axioms of Minimal simple type theory.  It is a formulation of the
Calculus of constructions where dependent and non dependent arrows are
distinguished. It is not a genuine extension of the Calculus of
constructions as, each time we use a non dependent constant $\arr$ or
$\imp$, we can use the dependent ones instead: a term of the form $t
\arr u$ can always been replaced with the term $t \arrd
\tabs{x}{\El~t}{u}$, where the variable $x$ does not occur in $u$, and
similarly for the implication.  Thus, any proof expressed in the
Minimal sub-theory, in particular any proof expressed in Minimal
simple type theory, can always be translated to the Calculus of
constructions.

Conversely, a proof expressed in the Calculus of constructions can be
expressed in this theory. In a proof, every symbol $\arrd$ or $\impd$
that uses a dummy dependency can be replaced with a symbol $\arr$ or
$\imp$. Every proof that does not use $\arrd$, $\impd$ and $\blpi$,
can be expressed in Minimal simple type theory.

\subsection{The Calculus of constructions with dependent types at
  the object level}

In the Calculus of constructions with a constant $\iota$ of type
$\Box$, there are no dependent types at the object level. We have
types $\iota \ra \iota$ and $\iota \ra *$, thanks to the rule $\langle
\Box, \Box, \Box \rangle$, but no type $\iota \ra \Box$. We can introduce such
dependent types by adding an extra sort $\triangle$, together with an axiom
$\Box:\triangle$ and a rule $\langle \Box, \triangle, \triangle
\rangle$.
We obtain this way a Pure type system whose expression in \lpr{}
\cite{cousineau07tlca} contains 13 declarations and 7 rules.

Using the same notations as above, $\Prop$ for $U_{*}$,
$\Set$ for
$U_{\Box}$, $\Sett$ for
$U_{\triangle}$, etc., we get exactly
the 13 axioms
($\Prop$), $(\Prf)$,
($\Set$), ($\El$),
($\Sett$), ($\Ty$),
($\o$), ($\set$),
($\impd$),
($\blpi$),
($\fa$),
($\arrd$), and
($\dtarr$).  The theory formed with these 13 axioms is thus equivalent to
more common formulations of the Calculus of constructions with
dependent types at that object level.

\subsection{The Calculus of constructions with prenex predicative polymorphism}

In the Calculus of constructions with an extra sort $\triangle$,
polymorphism at the object level can be added with the rule
$\langle \triangle, \Box, \Box \rangle$
that allows to build terms of the form
$\tpi{x}{\Box} {x \rightarrow x}:\Box$
at the expense of making the system
inconsistent~\cite{10.1007/BFb0014058,coquand:inria00076023}.
Thus, just like in
Simple type theory, we restrict to prenex predicative
polymorphism: so, besides the sort $\triangle$, whose only closed
irreducible element is $\Box$, we introduce a sort $\diamond$ for
schemes and two rules $\langle \triangle, \Box, \diamond \rangle$ to
build schemes by quantifying over an element of $\triangle$, that is,
over $\Box$, in a type and $\langle \triangle, \diamond, \diamond
\rangle$ to build schemes by quantifying over $\Box$ in another
scheme. We also add a rule $\langle \triangle, *, * \rangle$ to
quantify over $\Box$ in a proposition.

Alternatively, the Calculus of constructions with prenex predicative
polymorphism can be defined as a cumulative type system
\cite{barras99phd}, making $\Box$ a subsort of $\diamond$ and having
just one rule $\langle \triangle, \diamond, \diamond \rangle$ to
quantify over a variable of type $\Box$ in a scheme and a rule
$\langle \triangle, *, * \rangle$ to quantify over $\Box$ in a
proposition. As there is no function whose co-domain is $\Box$, this
subtyping does not need to propagate to product types.

Expressing this Cumulative type system in \lpr{} introduces 8
declarations and 4 rules on top of the Calculus of constructions: 2
declarations for the sort $\triangle$
\begin{itemize}
\item a constant $U_{\triangle}$ of type $\TYPE$,
\item and a constant $\varepsilon_{\triangle}$ of type $U_{\triangle} \ra \TYPE$,
\end{itemize}

\noindent
1 declaration and 1 rule for the axiom $\Box:\triangle$
\begin{itemize}
\item a constant $\dot{\Box}$ of type $U_{\triangle}$,
\item and a rule
$\varepsilon_{\triangle}~\dot{\Box} \lra \Set$
  (remember that $\Set$ is $U_{\Box}$),
\end{itemize}

\noindent
2 declarations for the sort $\diamond$
\begin{itemize}
\item a constant $U_{\diamond}$, that we write $\Scheme$, of type $\TYPE$,
\item and a constant $\varepsilon_{\diamond}$, that we write $\Els$, of type
  $\Scheme \ra \TYPE$,
\end{itemize}

\noindent
1 declaration and 1 rule to express that $\Box$ is a subtype of $\diamond$
\begin{itemize}
\item a constant $\lift$ of type $\Set \ra \Scheme$,
\item and a rule $\Els~(\lift~x) \lra \El~x$ (remember that $\El$
  is $\varepsilon_{\Box}$),
\end{itemize}

\noindent
1 declaration and 1 rule for the rule $\langle \triangle, \diamond,
\diamond\rangle$
\begin{itemize}
\item a constant $\dot{\Pi}_{\langle \triangle, \diamond, \diamond\rangle}$, that we
  write $\calfa$, of type $\tpi{z}{U_{\triangle}}{(\varepsilon_{\triangle}~z \ra \Scheme) \ra \Scheme}$,
\item  and a rule
$\Els~(\calfa~x~y) \lra \tpi{z}{\varepsilon_{\triangle}~x}{\Els~(y~z)}$,
\end{itemize}

\noindent
and 1 declaration and 1 rule for the rule $\langle \triangle, *, *\rangle$
\begin{itemize}
\item $\dot{\Pi}_{\langle \triangle,*,* \rangle}$, that we write $\sffa$, of type
$\tpi{z}{U_{\triangle}}{(\varepsilon_{\triangle}~z \ra \Prop) \ra \Prop}$,
\item and a rule
$\Prf~(\sffa~x~y)
\lra \tpi{z}{\varepsilon_{\triangle}~x}{\Prf~(y~z)}$.
\end{itemize}

But, again, this theory can be simplified. Indeed, just like $\Box$ is the
only closed irreducible term of type $\triangle$, $\dot{\Box}$ is the only
closed irreducible term of type $U_{\triangle}$ and thus for any closed
term $t$ of type $U_{\triangle}$, the term
$\varepsilon_{\triangle}~t$ reduces to $\Set$.  So in the type of
the constants $\calfa$ and $\sffa$: $\tpi{z}{U_{\triangle}}{(\varepsilon_{\triangle}~z \ra \Scheme) \ra \Scheme}$ and
$\tpi{z}{U_{\triangle}}{(\varepsilon_{\triangle}~z \ra \Prop) \ra
\Prop}$, we can replace the expression $\varepsilon_{\triangle}~z$
with $\Set$.  Then, as there is no point in building a function space
whose domain is a singleton, we can simplify these type further to
$(\Set \ra \Scheme) \ra \Scheme$ and $(\Set \ra \Prop) \ra \Prop$.
Accordingly, the associated reduction rules simplify to
$$\Els~(\calfa~y) \lra \tpi{z}{\Set}{\Els~(y~z)}$$
and
$$\Prf~(\sffa~y) \lra \tpi{z}{\Set}{\Prf~(y~z)}$$

Then we can drop the symbols $\varepsilon_{\triangle}$ and $\dot{\Box}$,
the rule $\varepsilon_{\triangle}~\dot{\Box} \lra \Set$, and the
symbol $U_{\triangle}$.
We are left with the 5 declaration and 3 rules
\begin{itemize}
\item $\Scheme:\TYPE$ \hfill ($\Scheme$-decl)
\item $\Els:\Scheme \ra \TYPE$ \hfill ($\Els$-decl)
\item $\lift:\Set \ra \Scheme$ \hfill ($\lift$-decl)
\item $\Els~(\lift~x) \lra \El~x$ \hfill ($\lift$-red)
\item $\calfa:(\Set \ra \Scheme) \ra \Scheme$ \hfill ($\calfa$-decl)
\item $\Els~(\calfa~y) \lra \tpi{z}{\Set}{\Els~(y~z)}$ \hfill ($\calfa$-red)
\item $\sffa:(\Set \ra \Prop) \ra \Prop$ \hfill ($\sffa$-decl)
\item $\Prf~(\sffa~y) \lra \tpi{z}{\Set}{\Prf~(y~z)}$ \hfill
($\sffa$-red)
\end{itemize}
that is, the 5 axioms ($\Scheme$), ($\Els$), ($\lift$), ($\calfa$),
and ($\sffa$).  Adding these 5 axioms to the 10 axioms defining the
Calculus of constructions yields the 15 axioms ($\Set$), ($\El$),
($\bliota$), ($\Prop$), $(\Prf)$, ($\impd$), ($\fa$), ($\o$),
($\arrd$), ($\blpi$), ($\Scheme$), ($\Els$), ($\lift$), ($\calfa$),
and ($\sffa$) defining the Calculus of constructions with prenex
predicative polymorphism~\cite{thire20phd}.

\section{Conclusion}

The theory $\cU$ is thus a candidate for a universal theory where
proofs developed in various proof systems: HOL Light, Isabelle/HOL,
HOL 4, Coq, Matita, Lean, PVS, etc. can be expressed.  This theory can
be complemented with other axioms to handle inductive types, recursive
functions, universes,
etc. \cite{assaf15phd,thire20phd,genestier20phd}. Note however that
the various axioms currently proposed for encoding recursive functions
are based on rewriting and may be difficult to translate to systems
requiring termination proofs. Using recursors should make this much
easier.

Each proof expressed in the theory $\cU$ can use a sub-theory of the theory
$\cU$, as if the other axioms did not exist: the classical connectives
do not impact the constructive ones, propositions as objects and
functionality do not impact predicate logic, dependent types and
predicate subtyping do not impact simple types, etc.

The proofs in the theory $\cU$ can be classified according to the
axioms they use, independently of the system they have been developed
in.  Finally, some proofs using classical connectives and quantifiers,
propositions as objects, functionality, dependent types, or predicate
subtyping may be translated into smaller fragments and used in systems
different from the ones they have been developed in, making the theory
$\cU$ a tool to improve the interoperability between proof systems.

In some cases, a proof can be directly transferred from one system to
the other if it does not use some axioms. For instance,
\cite{wang16aitp} showed that many proofs coming from HOL were in fact
constructive. However, we usually need to apply some transformations
on proofs to transfer them from one sub-theory to the other. For
instance, by replacing a dependent arrow by a non-dependent one when
the second argument is not actually dependent, by applying some
morphism on type universes \cite{thire20phd}, or by trying to
eliminate some uses of the excluded middle \cite{cauderlier16lfmtp},
which is part of the axioms of Isabelle/HOL \cite{paulson21}, Lean
\cite{carneiro19} and automated theorem provers.

Some of the sub-theories of $\cU$ are known to be consistent, but one
may wonder whether the theory $\cU$ itself is consistent. We
conjecture that it is but leave this difficult problem for future
work. A solution may be to extend the model developed by the second
author in \cite{dowek17icalp} for proving the consistency of the
encoding of HOL.

\subsection*{Acknowledgments}  The authors want to thank
Michael F\"arber, C\'esar Mu\~{n}oz, Thiago Felicissimo, and Makarius
Wenzel for helpful remarks on a first version of this paper,
as well as the anonymous reviewers for their useful comments.

\bibliographystyle{alphaurl}
\bibliography{axioms}
\end{document}